\newcommand{\R}{\ensuremath{\mathbb{R}}}
\newcommand{\dive}{\mathrm{div}}
\newcommand{\tr}{\mathrm{tr}}
\newtheorem{prop}{Proposition}
\newtheorem{remark}{Remark}
\newtheorem{proposition}{Proposition}
\newtheorem{lemma}{Lemma}
\title{Covariance-Adaptive Bouncy Particle Samplers via Split Lagrangian Dynamics}
\author[1]{A. Chevallier}
\author[ ]{E. Raab}
\affil[1]{Université de Strasbourg}
\begin{document}

\maketitle
\begin{abstract}
    Piecewise Deterministic Markov Processes (PDMPs) provide a powerful framework for continuous-time Monte Carlo, with the Bouncy Particle Sampler (BPS) as a prominent example. Recent advances through the Metropolised PDMP framework allow local adaptivity in step size and effective path length, the latter acting as a refreshment rate. However, current PDMP samplers cannot adapt to local changes in the covariance structure of the target distribution. 
    
    We extend BPS by introducing a position-dependent velocity distribution that varies with the local covariance structure of the target. Building on ideas from Riemannian Manifold Hamiltonian Monte Carlo and its velocity-based variant, Lagrangian Dynamical Monte Carlo, we construct a PDMP for which changes in the metric trigger additional velocity update events. Using a metric derived from the target Hessian, the resulting algorithm adapts to the local covariance structure.  
    Through a series of controlled experiments, we provide practical guidance on when the proposed covariance-adaptive BPS should be preferred over standard PDMP algorithms.
\end{abstract}

\section{Introduction}

Piecewise Deterministic Markov Processes (PDMPs) have recently emerged as a promising class 
of continuous-time Monte Carlo algorithms, offering non-reversible dynamics, rejection-free 
proposals, and competitive performance in high-dimensional settings \cite{fearnhead2018piecewise}. Among these, the 
\emph{Bouncy Particle Sampler} (BPS) \cite{bouchard2018bouncy} has become one of the most 
widely used examples, combining deterministic linear trajectories with stochastic bounce and 
refreshment events.

Recent work has extended BPS through the framework of Metropolised PDMPs \cite{chevallier2025practicalpdmpsamplingmetropolis}, in which numerical approximations of the event rate are combined with a Metropolis-Hastings correction. This approach removes the need for manually derived upper bounds on event rates, which had to be tailored to each target. This requirement was a serious limitation, since it discouraged the design of PDMPs with more sophisticated rates. At the same time, Metropolised PDMPs make it possible to introduce local adaptivity, most notably in the step size of the numerical approximation and the effective path length of the process, the latter acting as a mechanism for controlling the refreshment rate. In addition, non-local adaptation has been proposed for the covariance matrix \cite{bertazzi2022adaptive}. While these methods represent a significant advance, current PDMP-based samplers still cannot adapt to local changes in the covariance structure of the target distribution.

The goal of this paper is to address this limitation by extending BPS to incorporate 
\emph{position-dependent velocity distributions}, informed by a Riemannian metric on the position space. Our construction draws inspiration from Riemannian Manifold Hamiltonian Monte Carlo 
(RMHMC) \cite{girolami2011riemann}, and in particular from the Lagrangian Dynamical Monte Carlo formulation \cite{Lan_2015}, which replaces the momentum formulation of RMHMC with a velocity-based parametrization. Translating these ideas into the PDMP framework yields a new variant of BPS that adapts naturally to the local covariance structure: when the metric changes with position, additional events are introduced to update the velocity accordingly. 

In this work, we employ a metric based on the Hessian of the log-density of the target distribution, following \cite{betancourt2013general}. This metric captures local covariance structure of the target and adapts the dynamics to anisotropic targets.

The paper is organised as follows. Section~\ref{sec:previous-work} reviews background on PDMP sampling and Lagrangian Dynamical Monte Carlo. Section~\ref{sec:new-algo} introduces a PDMP based on split Lagrangian dynamics and our extension of BPS with position-dependent velocities. Section~\ref{sec:simulation} outlines the numerical simulation of these processes. Section~\ref{sec:experiments} reports controlled experiments designed to indicate when the proposed algorithm should be preferred to standard BPS. Section~\ref{sec:discussion} discusses the strengths and limitations of the method and concludes with directions for future research.

%%%%%%%%%%%%%%%%%%%%%%%%%%%%%%%%%%%%%%%%%%%%%%%%%%%%%%%%%%%%%%%%%%%%%%%%%%%
%%%%%%%%%%%%%%%%%%%%%%%%%%%%%%%%%%%%%%%%%%%%%%%%%%%%%%%%%%%%%%%%%%%%%%%%%%%
\section{Previous work}
\label{sec:previous-work}

%%%%%%%%%%%%%%%%%%%%%%%%%%%%%%%%%%%%%%%%%%%%%%%%%%%%%%%%%%%%%%%%%%%%%%%%%%%
\subsection{Piecewise deterministic Markov processes samplers}
A \emph{piecewise deterministic Markov process} (PDMP) is a continuous-time Markov process 
$(X_t)_{t \geq 0}$ whose dynamics are governed by deterministic flows punctuated by random 
events \cite{davis1984piecewise}. Between events, the state evolves deterministically according to an ordinary differential 
equation (ODE):
\[
\frac{d}{dt} X_t = \Phi(X_t),
\]
where $\Phi$ is a vector field. Event times are generated according to an inhomogeneous 
Poisson process with rate $\lambda(X_t)$. In other words, the probability of having an event in the time interval $[t,t+\mathrm{d}t]$ is $\lambda(X_t)\,\mathrm{d}t + o(\mathrm{d}t)$. At each event time, the process undergoes a 
stochastic update
\[
X_t \sim Q(\cdot \mid X_t),
\]
where $Q$ is a transition kernel.

In PDMP-based Monte Carlo, the triplet $(\Phi, \lambda, Q)$ is chosen so that a target probability measure $\mu$ is invariant \cite{fearnhead2018piecewise}. 
Let $\rho(x,t)$ denote the probability density of $X_t$ at position $x$ and time $t$. 
Between events, the density flows deterministically along $\Phi$, satisfying the Liouville equation
\[
\frac{\partial \rho}{\partial t} + \mathrm{div}(\rho \, \Phi) = 0.
\]
Events occur at rate $\lambda(x)$ and redistribute probability according to $Q$, contributing
\[
\frac{\partial \rho}{\partial t}\Big|_\text{events} = \int \lambda(x') \, Q(x \mid x') \, \rho(x',t)\,dx' - \lambda(x) \, \rho(x,t).
\]
Combining these terms gives the Kolmogorov forward equation (or Fokker--Planck equation) for the PDMP:
\[
\frac{\partial \rho(x,t)}{\partial t} = - \mathrm{div}(\rho(x,t) \, \Phi(x)) + \int \lambda(x') \, Q(x \mid x') \, \rho(x',t)\,dx' - \lambda(x) \, \rho(x,t).
\]
Stationarity is thus achieved when $\rho(x,t) = \mu(x)$ satisfies
\[
\mathrm{div}(\mu(x) \, \Phi(x)) = \int \lambda(x') \, Q(x \mid x') \, \mu(x') \, dx' - \lambda(x) \, \mu(x),
\]
which expresses "mass conservation": any change of density due to the deterministic flow is offset by stochastic events to and from the state.

A common special case occurs when events correspond to a deterministic, bijective, volume-preserving map $F$:
\[
X_t \mapsto F(X_t), \quad \text{with } |\det DF(x)| = 1.
\]
In this case, the jump term in the Kolmogorov forward equation simplifies, and the density evolution becomes
\[
\frac{\partial \rho(x,t)}{\partial t} = - \mathrm{div}(\rho(x,t) \, \Phi(x)) + \lambda(F^{-1}(x)) \, \rho(F^{-1}(x),t) - \lambda(x) \, \rho(x,t),
\]
where the gain term accounts for mass arriving from the unique pre-image under $F$. 
Consequently, stationarity is achieved when $\rho(x,t) = \mu(x)$ satisfies
\begin{equation}
    \label{eq:pdmp-inv}
    \mathrm{div}(\mu(x) \, \Phi(x)) = \lambda(x) \, \mu(F^{-1}(x)) - \lambda(x) \, \mu(x),    
\end{equation}
which expresses conservation of mass under deterministic, bijective, volume-preserving jumps.

\begin{remark}
    While the mass-conservation perspective provides an intuitive understanding of invariance, a fully rigorous treatment of PDMPs requires the formalism of the extended infinitesimal generator \cite{davis1984piecewise, durmus2021piecewise,chevallier2024pdmp}.
\end{remark}

\paragraph{Bouncy Particle Sampler}\mbox{}\\

The \emph{Bouncy Particle Sampler} (BPS) \cite{bouchard2018bouncy} is a PDMP-based Markov chain 
Monte Carlo (MCMC) method for sampling from a target density 
$\pi$. The process is defined on the extended state space 
\[
(x,v) \in \mathbb{R}^d \times \mathbb{R}^d,
\]
and admits the stationary density
\[
\mu(x,v) = \pi(x)\,p_v(v),
\]
where $p_v(v)$ is a chosen distribution for the velocity. Two common choices are:
\begin{itemize}
    \item the \emph{uniform distribution on the unit sphere}, 
    $p_v(v) = \text{Unif}(\{v \in \mathbb{R}^d : \|v\|=1\})$;
    \item the \emph{standard Gaussian distribution}, $p_v(v) = \mathcal{N}(0,I_d)$.
\end{itemize}
The deterministic dynamics are governed by the vector field
\[
\Phi(x,v) = (v, 0),
\]
so that between events the system evolves according to
\[
\frac{d}{dt} X_t = V_t, 
\qquad 
\frac{d}{dt} V_t = 0,
\]
i.e. linear motion.
Event times are generated according to the inhomogeneous Poisson rate
\[
\lambda(x,v) = [-\langle v, \nabla \log \pi\rangle]^+.
\]
At such an event, the velocity is updated deterministically by reflection:
\[
F: V \mapsto V - 2 \frac{\langle V, \nabla \log \pi \rangle}{\|\nabla \log \pi \|^2} \, \nabla \log \pi,
\]
which corresponds to a ``bounce'' against the energy gradient. 

One readily verifies that equation \eqref{eq:pdmp-inv} is satisfied, hence $\mu$ is stationary. This follows from directly from
\begin{enumerate}
    \item $\mathrm{div}(\mu\Phi)=\mu\,v\cdot\nabla \log \pi$,
    \item The reflection $F$ reverses the sign of $\langle v,\nabla \log \pi\rangle$ hence $\lambda(F(x,v))-\lambda(x,v)= v\cdot\nabla \log \pi$ ,
    \item $F$ preserves volumes and $p_v$ is invariant under $F$, so $\mu(F(x,v))=\mu(x,v)$. 
\end{enumerate}

\begin{remark}
    To ensure ergodicity, additional \emph{refreshment events} are introduced at a constant 
rate, where $V$ is resampled from $p_v$.
\end{remark}

%%%%%%%%%%%%%%%%%%%%%%%%%%%%%%%%%%%%%%%%%%%%%%%%%%%%%%%%%%%%%%%%%%%%%%%%%%%
\subsection{Metropolised PDMP samplers}
\label{subsec:metropolised-PDMP}

Exact simulation of PDMP samplers typically requires manually derived upper bounds on the event rates 
to implement the standard thinning procedure. For complex rates, 
this approach is often infeasible. To address this limitation, we adopt the Metropolised PDMP framework 
\cite{chevallier2025practicalpdmpsamplingmetropolis}. 
For clarity, we present the framework in a simplified setting with:
\begin{enumerate}
    \item a target measure with density $\mu$,
    \item two distinct event rates, $\lambda_1$ and $\lambda_2$,
    \item deterministic transition kernels $F_1(z)$ and $F_2(z)$ that preserve volume and that satisfy\ldots
    \item $\mu(F_i(z)) = \mu(z)$ for all $z$.
\end{enumerate}

The Metropolised PDMP framework interprets an approximately simulated PDMP trajectory as a Metropolis--Hastings proposal on path space. It assumes that the deterministic flow of the PDMP has a known exact analytical solution, while the event rates are approximated numerically. Specifically, the approximate rates $\tilde\lambda_1$ and $\tilde\lambda_2$ are computed along the trajectory, for instance by replacing the true rates with piecewise-constant approximation.

To apply a Metropolis--Hastings correction, it is necessary to define the probability density of a given path, which in turn requires a description of path space. Here, a path is specified by the initial state $z_0$, the event times $t_1,\dots,t_k$, and the type of each event, indicating whether it arises from the first or second rate. The path space over the time interval $[0,T]$ is then given by
\[
    W = E \times \bigcup_{k \in \mathbb{N}} \left(\, ]0,T[ \times \{1,2\} \,\right)^k,
\]
where $E$ denotes the state space of the PDMP. The space $W$ is endowed with the Lebesgue measure on $]0,T[^k$ on each element of the union, allowing us to define densities. The probability density of a given path $\tilde w = (\tilde z_0,t_1,a_1,...,t_k,a_k)$, where $a_i \in \{1,2\}$ is then
\begin{equation}
    \label{eq:proba-path}
    \tilde p(\tilde w \mid z_0)
    = \Bigg[ \prod_{k=1}^k 
        q_i \tilde\lambda(\tilde z_{t_i^-}) 
        \exp\!\Big( -\!\!\int_{t_{i-1}}^{t_i} \tilde\lambda(\tilde z_s)\,ds \Big)  \Bigg]
        \exp\!\Big( -\!\!\int_{t_k}^{T} \tilde\lambda(\tilde z_s)\,ds \Big),
\end{equation}
where $\tilde\lambda = \tilde \lambda_1 + \tilde \lambda_2$ and $q_i = \frac{\mathbf{1}_{a_i = 1}\tilde\lambda_1(\tilde z_{t_i^-}) + \mathbf{1}_{a_i = 2}\tilde\lambda_2(\tilde z_{t_i^-})}{\tilde \lambda(\tilde z_{t_i^-})}$.

A final ingredient is the use of the time-reversed process. If a path leads from $\tilde z_0$ to $\tilde z_T$, detailed balance requires the ability to construct a corresponding path from $\tilde z_T$ back to $\tilde z_0$. This is naturally achieved by considering the time-reversal of the PDMP, which is itself a PDMP \cite{lopker2013time}. In our simplified setting, the paths of the reversed process are defined on the same path space $W$.
The application $R:W \rightarrow W$ that reverses a path is defined as 
\[
    R: (\tilde z_0,t_1,a_1,...,t_k,a_k) \mapsto (\tilde z_T,T - t_k,a_k,...,T - t_1,a_1).
\]
In this simplified setting, the reverse process is the PDMP that:
\begin{enumerate}
    \item follows the deterministic flow "backwards in time",
    \item uses event rates $\lambda_i^r(z) = \lambda_i(F_i^{-1}(z))$,
    \item and applies the jump transition kernel $z \mapsto F_i^{-1}(z)$,
\end{enumerate}
and can therefore be simulated just as easily as the forward process.

Starting from a state $z_0$ and a time-direction $\gamma \in \{\pm 1\}$, one generates a piecewise-deterministic trajectory 
$\tilde w = (z_t)_{t \in [0,T]}$ either forward or backward in time. Then the end point $\tilde z_T$ is accepted with probability 
\begin{equation}
    \label{eq:acceptance-proba}
    \alpha(\tilde w) =
    \begin{cases}
    1 \wedge \dfrac{\mu(\tilde w_T)\,\tilde{p}^r(R(\tilde w)\mid \tilde w_T)}
    {\mu(\tilde z_0)\,\tilde{p}(\tilde w \mid \tilde z_0)\,\psi(R(\tilde w))}, & \gamma = 1, \\[1.5em]
    1 \wedge \dfrac{\mu(\tilde w_T^r)\,\tilde{p}(R(\tilde w^r)\mid \tilde w_T^r)\,\psi(\tilde w^r)}
    {\mu(\tilde z_0)\,\tilde{p}^r(\tilde w^r \mid \tilde z_0)}, & \gamma = -1,
    \end{cases}
\end{equation}
where $\psi$ is the volume change associated to $R$, that is for any $f: W \mapsto W$,
\[
    \int_W f(R(w)) dw = \int_W f(w) \psi(w) dw.
\]

\begin{remark}[Change of volume]
    \label{rmk:vol-change}
    In our setting, the volume change induced by $R$ reduces to the volume change of the mapping that takes $\tilde z_0$ to $\tilde z_T$, given fixed event times and types. This change can be computed from the local volume expansion along the deterministic flow, since the event transition kernels are assumed to preserve volume.
\end{remark}

\subsection{Split flow PDMP samplers}
\label{subsec:flow-splitting}

This section describes a splitting technique introduced in \cite{power_2020}.
Let $\mu$ be a probability measure on $\mathbb{R}^{d_1+d_2}$, with a density with respect to the Lebesgue measure. We assume there is a flow associated to a vector field $\Phi$ on $\mathbb{R}^{d_1+d_2}$ that leaves $\mu$ invariant, that is 
\[
    \dive(\mu \Phi) = 0.
\]
For the purpose of sampling $\mu$, following this flow would suffice. However, in many cases an analytical solution to this flow is unknown. 

We propose here to split the flow in parts that are solvable when handled separately. For any $z \in \mathbb{R}^{d_1+d_2}$, we write $z = (x,y)$ with $x\in \mathbb{R}^{d_1}$ and $y \in \mathbb{R}^{d_2}$.
We write the vector field $\Phi(z) = (\Phi_x(z),\Phi_y(z))$. Since $\Phi$ preserves $\mu$, we get at any given point $z$:
\begin{equation}
    \dive(\Phi(z) \mu(z)) = \dive_x(\Phi_x(z)\mu(z)) + \dive_y(\Phi_y(z)\mu(z)) = 0    
    \label{eq:full-flow-invariance}
\end{equation}
where $\dive_x$ and $\dive_y$ are the divergences defined on the respective spaces. We define the \textit{split pdmp} on the state space  $\mathbb{R}^{d_1+d_2} \times \{0,1\}$ as follow:
\begin{enumerate}
    \item deterministic dynamics follow the vector field $((1-\alpha)\,\Phi_x(z),\, \alpha\,\Phi_y(z))$ for $(z,\alpha) \in \mathbb{R}^{d_1+d_2} \times \{0,1\}$, so that depending on $\alpha$ the system evolves according to either the $x$-flow or the $y$-flow
    \item at each jump event, the kernel flips $\alpha$, switching which flow governs the dynamics
    \item if $\alpha = 0$, the event rate is 
    \[
        \lambda(z,0) = \Biggl[\frac{1}{\mu(z)}\,\operatorname{div}_x\!\bigl(\Phi_x(z)\,\mu(z)\bigr)\Biggr]^+,
    \] 
    \item if $\alpha = 1$, the event rate is 
    \[
        \lambda(z,1) = \Biggl[\frac{1}{\mu(z)}\,\operatorname{div}_y\!\bigl(\Phi_y(z)\,\mu(z)\bigr)\Biggr]^+.
    \] 
\end{enumerate}

\begin{remark}
    \label{remark:div-choice}
    From equation \ref{eq:full-flow-invariance}, we have 
    \[
    \dive_x(\Phi_x(z)\,\mu(z)) = -\,\dive_y(\Phi_y(z)\,\mu(z)),
    \] 
    so the event rate $\lambda$ can be computed using either divergence, depending on which is more convenient. Consequently we can derive the rates from a single scalar $\rho$ such that
    \[
    \lambda(z,0) = [\rho]^+ \qquad \qquad \lambda(z,1) = [-\rho]^+.
    \]
\end{remark}

\begin{prop}[Split PDMP]
   The measure $\nu = \mu \times \mathrm{Unif}(\{0,1\})$ is invariant by the split PDMP.
   \label{prop:split-pdmp-invariance}
\end{prop}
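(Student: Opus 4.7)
The plan is to verify the mass-conservation identity characterising PDMP invariance (equation \eqref{eq:pdmp-inv}), extended to the joint state space $\R^{d_1+d_2}\times\{0,1\}$. The jump kernel $F(z,\alpha)=(z,1-\alpha)$ is a deterministic volume-preserving involution — it permutes the two $\alpha$-slices while leaving $z$ fixed — so the special form \eqref{eq:pdmp-inv} applies directly. I would thus check, for each $\alpha\in\{0,1\}$ separately, the balance
\[
\mathrm{div}_z\bigl(\nu(z,\alpha)\,\Phi_\alpha(z)\bigr) \;=\; \lambda(z,1-\alpha)\,\nu(z,1-\alpha) \;-\; \lambda(z,\alpha)\,\nu(z,\alpha),
\]
where $\nu(z,\alpha)=\tfrac12\mu(z)$ and $\Phi_\alpha(z)=((1-\alpha)\Phi_x(z),\alpha\Phi_y(z))$.

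For $\alpha=0$ the left-hand side collapses to $\tfrac12\rho\mu$, with $\rho$ the scalar from Remark \ref{remark:div-choice}. The right-hand side, after substituting the two one-sided rates and invoking \eqref{eq:full-flow-invariance} to rewrite both through the same $\rho$, becomes $\tfrac{\mu}{2}\bigl([-\rho]^+-[\rho]^+\bigr)$. The elementary positive-part identity $[a]^+-[-a]^+=a$, valid for every real $a$, then collapses the two sides to matching multiples of $\rho\mu$. The case $\alpha=1$ follows by swapping the roles of $x$ and $y$, which also flips the sign of $\rho$, so the verification is entirely symmetric.

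The substantive content reduces to three ingredients: equation \eqref{eq:full-flow-invariance} to couple the two directional divergences, the positive-part identity $[a]^+-[-a]^+=a$, and the fact that the $\alpha$-flip preserves joint volume on $\R^{d_1+d_2}\times\{0,1\}$. None of these is deep on its own, and the only real difficulty is sign-tracking through the Fokker--Planck balance, the one-sided rates, and the split divergence. Conceptually, the construction works because $\Phi_x$ and $\Phi_y$ fail to preserve $\mu$ in exactly opposite amounts, and the $\alpha$-flips reshuffle mass between the two uniform slices $\{\alpha=0\}$ and $\{\alpha=1\}$ at precisely the rate needed to cancel this failure.
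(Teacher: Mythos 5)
Your overall strategy is exactly the paper's: check the stationary mass balance separately on each $\alpha$-slice, use \eqref{eq:full-flow-invariance} to express both one-sided rates through a single scalar, and close with the positive-part identity. But the final step does not close as you claim, and the place it fails is precisely the sign-tracking you yourself identify as ``the only real difficulty.'' With the balance written as you write it (transport term $\mathrm{div}_z(\nu\,\Phi_\alpha)$ on the left, consistent with \eqref{eq:pdmp-inv} and the Kolmogorov forward equation), the left-hand side at $\alpha=0$ is $+\tfrac{1}{2}\rho\mu$ with $\rho=\mu^{-1}\mathrm{div}_x(\Phi_x\mu)$, while the right-hand side with the rates as defined in Section~\ref{subsec:flow-splitting} (equivalently Remark~\ref{remark:div-choice}: $\lambda(z,0)=[\rho]^+$, $\lambda(z,1)=[-\rho]^+$) is $\tfrac{\mu}{2}\bigl([-\rho]^+-[\rho]^+\bigr)=-\tfrac{1}{2}\rho\mu$. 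The two sides are opposite, not ``matching multiples of $\rho\mu$''; the identity $[a]^+-[-a]^+=a$ hands you the wrong sign, not the right one.

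To be fair, the discrepancy is not of your making: the rate assignment in the split-PDMP definition is inconsistent with the standard forward equation. The assignment that makes the balance close is $\lambda(z,0)=[-\rho]^+$ and $\lambda(z,1)=[\rho]^+$, i.e.\ each slice must jump away exactly when its own flow is locally accumulating mass (compare the BPS rate $[-v\cdot\nabla\log\pi]^+=[-\mu^{-1}\mathrm{div}(\mu\Phi)]^+$, and note that the SL-PDMP and CA-BPS sections use this corrected assignment). The paper's own proof silently compensates for the swapped rates by writing the transport term with $+\,\mathrm{div}$ rather than $-\,\mathrm{div}$, so two sign errors cancel. A careful execution of your plan would have surfaced this and forced you either to correct the rates or to flag the inconsistency; asserting that the two sides match glosses over the one nontrivial point of the entire verification.
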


\begin{proof}
Let $(\nu_t)_{t \geq 0}$ denote the law of the process at time $t$, with initial distribution $\nu$. To verify invariance, it suffices to show that $\nu_t = \nu$ for all $t \geq 0$.  
Consider the mass balance at $(z,0)$:
\[
\frac{d \nu_t(z,0)}{dt} = \dive_x\bigl(\Phi_x(z)\,\nu_t(z,0)\bigr) - \lambda(z,0)\,\nu_t(z,0) + \lambda(z,1)\,\nu_t(z,1).
\]
Substituting $\nu_t = \nu$ and using the definitions of $\lambda$, we get
\[
\frac{d \nu_t(z,0)}{dt} = \dive_x\bigl(\Phi_x(z)\,\mu(z)\bigr) - \mu(z)\frac{1}{\mu(z)} \dive_x\bigl(\Phi_x(z)\,\mu(z)\bigr) = 0.
\]
A analogous argument holds for $(z,1)$, noting that 
\[
\dive_y(\Phi_y(z)\,\mu(z)) = -\,\dive_x(\Phi_x(z)\,\mu(z)).
\] 
Thus $\nu$ is invariant under the split PDMP.
\end{proof}

%%%%%%%%%%%%%%%%%%%%%%%%%%%%%%%%%%%%%%%%%%%%%%%%%%%%%%%%%%%%%%%%%%%%%%%%%%%
\subsection{Lagrangian dynamical Monte Carlo}
\label{subsec:lagragian-dynamical-mc}

\textit{Throughout this section and in what follows, we adopt Einstein notation. This notation clarifies computations and proofs. Results are stated in traditional (vector-matrix) notation whenever possible, so that the main arguments remain accessible. The detailed derivations require familiarity with Einstein notation, but implementing the proposed algorithms does not.}

Our aim is to sample a probability density $\pi$ for $x \in \mathbb{R}^d$. The core idea of the classical Hamiltonian Monte Carlo algorithm (that we shall call \textit{Euclidean HMC}) is to use a dynamical system described by a particular Hamiltonian that leaves $\pi$ \cite{HMCNeal} invariant. There, the target $\pi$ is extended to a phase space distribution for $(x,p) \in\R^{2d}$, where the conditional momentum is assumed to be normally distributed:
\[
p|x \sim \mathcal{N}(0, \Sigma). 
\]
The covariance matrix $\Sigma$, or equivalently, the mass matrix associated to the system is a crucial parameter for efficient sampling. Sampling highly correlated distributions requires $\Sigma$ to align with the covariance structure of the distribution $\pi$. One important limitation of Euclidean HMC is that $\Sigma$ is fixed, therefore any distribution exhibiting several relevant covariance structures, such as a mixture of gaussians, will be be difficult to sample.
To overcome this difficulty, \textit{Riemannian Hamiltonian Monte Carlo} considers more general covariance matrices where $\Sigma = G(x)$ \cite{girolami2011riemann}. The dynamical systems that result from positive definite $G(x)$ describe the motion of a point in a curved space with $G^{-1}$ as a metric. In other words, the setting is that of a Riemannian manifold which earns the corresponding sampling methods the name \textit{Riemannian manifold HMC}. 

In order to ease the numerical problems faced by \textit{Riemannian Hamiltonian Monte Carlo}, in Lagrangian Dynamical Monte Carlo \cite{Lan_2015} the authors reframe the Hamiltonian system described above into a Lagrangian system, i.e. a dynamical system using velocities $v$ instead of the momenta $p$.

\begin{remark}
    In Euclidean HMC, the momentum $p$ and velocity $v$ coincide if the mass matrix is identity, and the distinction is generally not important as $p = mv$. This is not the case in the Riemannian HMC setting, where $v = G(x)^{-1} p$.
\end{remark}

In Lagrangian Dynamical Monte Carlo, the velocity $v$ is  normally distributed conditionally on the position $x$:
    \[
        v | x\sim \mathcal{N}(0,G(x)^{-1}).
    \]
This yields a distribution $\mu$ for $(x,v) \in \R^{2d}$, which is left invariant by the Lagrangian of a perturbed geodesic. The flow is defined by the equations
\begin{align*}
    \dot{x} &= v \\
    \dot{v} &= -\eta(x,v) - G(x)^{-1} \nabla_x \bigl(- \log \pi(x) + \tfrac{1}{2} \log|G(x)|\bigr).
\end{align*}
where $\eta$ is a vector valued function that is quadratic in the velocity 
\[
(\eta(x,v))_k = \sum_{i,j} (\Gamma_{ij}(x))_k\, v_i v_j
\]
and where for each pair $i,j$ the vector $\Gamma_{ij}(x)$ has components
\[
(\Gamma_{ij}(x))_k = \tfrac{1}{2}\sum_l (G(x)^{-1})_{kl}\bigl(\partial_i G_{jl}(x)+\partial_j G_{il}(x)-\partial_l G_{ij}(x)\bigr).
\]
In this case, traditional notation is cumbersome, and the previous equations reads as follow in Einstein notation:
\begin{align*}
\frac{dx^a}{dt} &= v^a
\\
\frac{dv^a}{dt} &= -\eta^a - G^{ab}\partial_b(-\log(\pi) +\frac{1}{2} \log \det G ))  
\end{align*}
where 
\[
\eta^a = \Gamma^a_{bc}v^bv^c
\]
and $\Gamma^a_{bc}$ are the \textit{Christoffel symbols} of the metric $G$, that is to say,
\[
\Gamma^{a}_{bc} = \frac{1}{2}G^{ad}(G_{cd,b} + G_{bd,c} - G_{bc,d}).
\]

Since Lagrangian Dynamical Monte Carlo is a reparameterization of Riemannian HMC, the measure $\mu$ is invariant by the ODE flow. 
\begin{proposition}
    Let $\Phi$ denote the vector field defining the flow on $\mathbb{R}^{2d}$. Then
    \[
        \dive(\mu \Phi) = 0.
    \]
\end{proposition}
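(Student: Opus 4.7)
My approach is the direct verification that $\mathrm{div}(\mu\Phi)=0$ via the identity
\[
\mathrm{div}(\mu\Phi) = \mu\bigl(\mathrm{div}(\Phi) + \Phi\cdot\nabla\log\mu\bigr),
\]
so it suffices to show the bracketed quantity vanishes. I will work in Einstein notation on phase space $(x,v)\in\mathbb{R}^{2d}$, with log-density
\[
\log\mu(x,v) = \log\pi(x) + \tfrac{1}{2}\log\det G(x) - \tfrac{1}{2}G_{bc}(x)\,v^b v^c + \text{const}.
\]

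\paragraph{Step 1 (divergence of $\Phi$).} Since $\Phi_x^a = v^a$ has no $x$-dependence, $\mathrm{div}_x(\Phi_x)=0$. In the $v$-divergence only $-\eta^a$ depends on $v$, and using the symmetry $\Gamma^a_{bc}=\Gamma^a_{cb}$ one gets
\[
\mathrm{div}_v(\Phi_v) = -\partial_{v^a}(\Gamma^a_{bc}v^b v^c) = -2\Gamma^a_{ab}\,v^b.
\]

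\paragraph{Step 2 (transport terms).} Computing $\Phi_x\cdot\nabla_x\log\mu = v^a\partial_a\log\pi + \tfrac12 v^a\partial_a\log\det G - \tfrac12 v^a v^b v^c\,\partial_a G_{bc}$, and $\Phi_v\cdot\nabla_v\log\mu = -\Phi_v^a\,G_{ab}v^b$. Expanding the latter, the potential contribution from $G^{ac}\partial_c(-\log\pi+\tfrac12\log\det G)$ combines with $G_{ab}$ to give $\delta^c_b$, producing exactly $-v^b\partial_b\log\pi + \tfrac12 v^b\partial_b\log\det G$. These match the $\log\pi$ contributions from $\Phi_x\cdot\nabla_x\log\mu$ with the correct signs to kill them.

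\paragraph{Step 3 (cubic-in-$v$ cancellation).} The remaining contribution from $\Phi_v\cdot\nabla_v\log\mu$ is $\eta^a G_{ab}v^b = \tfrac12 v^b v^c v^d(G_{cb,d}+G_{db,c}-G_{cd,b})$. The main algebraic step — and the part I expect to be most error-prone — is observing that since $v^b v^c v^d$ is totally symmetric, each of the three summands equals the same tensor $A := v^b v^c v^d G_{bc,d}$ after relabeling. Hence $\eta^a G_{ab}v^b = \tfrac12 A$, which exactly cancels the $-\tfrac12 v^a v^b v^c \partial_a G_{bc}$ term from Step~2.

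\paragraph{Step 4 (closing identity).} What remains is $-2\Gamma^a_{ab}v^b + v^a\partial_a\log\det G + \tfrac12 v^b\partial_b\log\det G - \tfrac12 v^b\partial_b\log\det G$, i.e.\ $-2\Gamma^a_{ab}v^b + v^b\partial_b\log\det G$. I would then invoke the classical identity $2\Gamma^a_{ab} = \partial_b\log\det G$, proved in one line by expanding $\Gamma^a_{ab} = \tfrac12 G^{ad}(G_{bd,a}+G_{ad,b}-G_{ab,d})$: the first and third terms cancel by swapping $a\leftrightarrow d$ using symmetry of $G$ and $G^{-1}$, leaving $\tfrac12 G^{ad}G_{ad,b} = \tfrac12\partial_b\log\det G$ via Jacobi's formula. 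This closes the argument. The conceptual obstacle is not in any single step but in bookkeeping the index gymnastics; the cubic-in-$v$ symmetrization in Step~3 and the trace-of-Christoffel identity in Step~4 are the two places where care is required.
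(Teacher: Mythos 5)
Your computation is correct, and it is a genuinely different route from what the paper does: the paper gives no direct argument at all, simply asserting that the invariance follows because Lagrangian Dynamical Monte Carlo is a reparametrization of Riemannian HMC and deferring to \cite{Lan_2015}. Your direct verification of $\mathrm{div}(\Phi)+\Phi\cdot\nabla\log\mu=0$ is self-contained, and each step checks out: the $v$-divergence $-2\Gamma^a_{ab}v^b$ in Step~1, the cancellation of the $\log\pi$ and $\tfrac12\log\det G$ transport terms after contracting $G^{ac}G_{ab}=\delta^c_b$ in Step~2, the symmetrization $v^bv^cv^d(G_{db,c}+G_{cb,d}-G_{cd,b})=A$ in Step~3, and the trace-of-Christoffel identity $2\Gamma^a_{ab}=\partial_b\log\det G$ via Jacobi's formula in Step~4. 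What your approach buys is transparency and consistency with the mass-conservation viewpoint the paper uses everywhere else: in fact your calculation is essentially the sum of the two pieces the paper later computes separately, namely $-\mu^{-1}\mathrm{div}_v(\Phi_v\mu)=\rho$ (Proposition~\ref{prop:div-v-fixed} and Lemma~\ref{lemma:rho SL}) and $-\mu^{-1}\mathrm{div}_x(\Phi_x\mu)=-v^i\partial_i\log\mu=-\rho$ (the $x$-flow computation in the CA-BPS invariance proof), whose sum vanishes. The citation-based argument is shorter and leverages the known symplectic-invariance result for RMHMC, but yours makes the splitting construction of Section~\ref{subsec:flow-splitting} immediately legible, since it exhibits exactly the scalar $\rho$ whose positive and negative parts become the event rates.
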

\begin{proof}
    For a detailed proof, see \cite{Lan_2015}.
\end{proof}
In practice, the flow in Lagrangian Dynamical Monte Carlo is approximated numerically and incorporated into a Metropolis--Hastings correction step.

\subsubsection{Choice of the Metric G}
In principle there is some freedom in choosing $G$ but as mentioned above the choice impacts the sampling efficiency. In \cite{betancourt2013general} the soft absolute of the Hessian for the logarithmic density is used
\[
G = \wr \mathcal{H}_{\log \pi} \wr_\alpha,
\]
where $\mathcal{H}_f$ refers to the Hessian of $f$ and $\wr K \wr_\alpha$ denotes the soft absolute of a matrix $K$ of hardness $\alpha$. Explicitly
\[
\wr K \wr_\alpha = (\exp(\alpha K) + \exp(-\alpha K)) \cdot  K \cdot (\exp(\alpha K)+\exp(-\alpha K))^{-1}.
\]
We use the same metric, but for brevity we shall in later sections let the hardness $\alpha$ be implicit. As a simple illustration of the resulting relationship between the covariance structure, the geometry and velocity distribution, see figure \ref{fig:velocity-spaces} below. 

\begin{figure}[hbt!]
    \centering
    % First subfigure (no individual caption)
    \begin{subfigure}{0.45\textwidth}
        \centering
        \includegraphics[width=\textwidth]{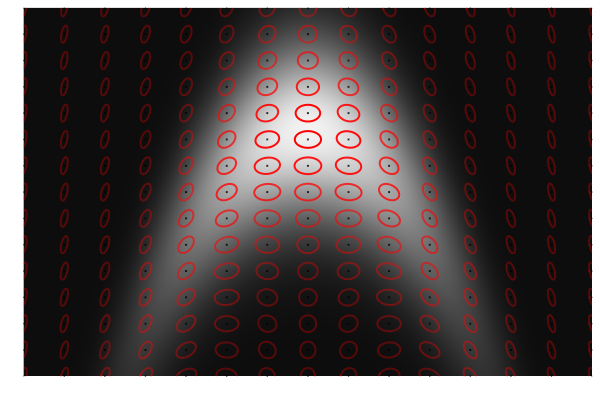}
    \end{subfigure}
    \hfill
    % Second subfigure (no individual caption)
    \begin{subfigure}{0.45\textwidth}
        \centering
        \includegraphics[width=\textwidth]{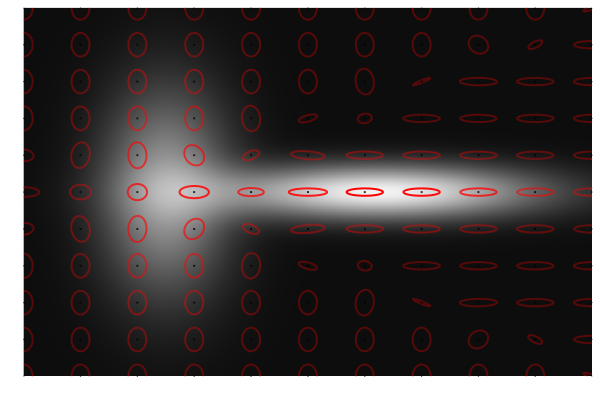}
    \end{subfigure}
    \caption{Velocity spaces for the banana distribution (left) and a mixture of gaussians (right), represented by level curves of the velocity distribution superimposed on the densities.}
    \label{fig:velocity-spaces}
\end{figure}

To construct the Christoffel symbols we shall need the derivatives of $G$. A convenient form of the derivative of $G$ is given in \cite{betancourt2013general}. Given the spectral decomposition 
\[
\mathcal{H}_{\log \pi} = Q D Q^{-1}
\]
the chosen metric becomes
\[
G = \wr \mathcal{H}_{\log \pi}\wr_\alpha = Q \wr D \wr_\alpha Q^T
\]
where the diagonal matrix $(\wr D \wr_\alpha)$ has elements
\[
(\wr D \wr_\alpha)_{ii} = \begin{cases}
    \lambda_i \coth (\alpha \lambda_i) \qquad &\text{if} \qquad \lambda_i \neq 0
    \\
    1/\alpha \qquad &\text{else}
    \end{cases}
\]
 Let $J$ be a matrix determined by the eigenvalues $\lambda_i = D_{ii}$ as follows 
\[
J_{ij} = \begin{cases}
    ({\lambda_i - \lambda_j})^{-1}(\lambda_i \coth{\alpha \lambda_i} - \lambda_j \coth{\alpha \lambda_j}) \qquad &\text{if} \qquad \lambda_i \neq \lambda_j
    \\
    \coth(\lambda_i \alpha) - \alpha \lambda (\sinh(\alpha \lambda_i))^{-2} &\text{if} \qquad \lambda_i = \lambda_j \neq 0
    \\
    0 \qquad &\text{else}
    \end{cases}
\]
Moreover for notational simplicity, for each $i$ $M_i$ be the matrix associated to a derivative of the Hessian in the $i$:th direction given by
\[
M_i = Q^T (\partial_i \mathcal{H}_{\log \pi}) Q
\]
Then, given $J$ and $M_i$, the derivatives of $G$ can be computed as
\[
\partial_i G_{jk} = \partial_i (\wr \mathcal{H}_{\log \pi} \wr_\alpha)_{jk}= (Q (J \circ M_i) Q^T)_{jk}
\]
where $\circ$ denotes the Hadamard product.

%%%%%%%%%%%%%%%%%%%%%%%%%%%%%%%%%%%%%%%%%%%%%%%%%%%%%%%%%%%%%%%%%%%%%%%%%%%
%%%%%%%%%%%%%%%%%%%%%%%%%%%%%%%%%%%%%%%%%%%%%%%%%%%%%%%%%%%%%%%%%%%%%%%%%%%
\section{Covariance-Adaptive PDMPs}
\label{sec:new-algo}

This section introduces two new algorithms. The first, Split Lagrangian PDMP, arises from a splitting of the Lagrangian dynamics introduced in the previous section and serves as an intermediate scheme. The second, Covariance-Adaptive BPS, adjusts the former (and extends BPS) by replacing costly events with the bounce events of BPS, yielding our final algorithm. 
We begin by presenting material common to both methods. 

Both algorithms are designed to sample from a target density $\pi$ on $\mathbb{R}^d$. For this purpose, $\pi$ is extended to a joint distribution $\mu$ on $\mathbb{R}^{2d}$ by introducing a velocity component, as described in Section~\ref{subsec:lagragian-dynamical-mc}:
\[
\mu(x, v) = \pi(x)\, \mathcal{N}\bigl(v \mid 0, G(x)^{-1}\bigr),
\]
where $G$ denotes the SoftAbs metric derived from the Hessian of the log-target, as proposed in \cite{betancourt2013general}. 
The logarithm of $\mu$ is given by
\[
\log \mu = \log \pi - \frac{1}{2} v^\top G v + \frac{1}{2} \log \det G - \frac{d}{2} \log(2\pi).
\]

\begin{remark}
The following algorithms and proofs do not depend on the chosen metric $G$ and applies to metrics beyond the SoftAbs metric.
\end{remark}

For both algorithms, we employ the flow-splitting technique introduced in Section~\ref{subsec:flow-splitting} rather than solving the full dynamical system numerically. The split flows for both methods take the general form
\begin{align}
    \Phi_x(x, v) &= v, & \text{for } \alpha=0, \nonumber\\ 
    \label{eq:eom}
    \Phi_v(x, v) &= -\eta(x, v) - G(x)^{-1} \nabla_x \phi(x), & \text{for } \alpha=1,
\end{align}
where $\phi$ depends on the specific PDMP under consideration but is independent of the velocity $v$.

We present two convenient expressions for the divergence of $\Phi_v$. The first is particularly useful when $x$ is fixed:
\begin{proposition} \label{prop:compact div}
    The velocity-divergence of $\Phi_v$ satisfies 
    \[
        -\frac{1}{\mu} \, \mathrm{div}_v (\Phi_v \mu)
    = 2\tr(\Gamma) \cdot v + \Phi_v^T  G  v,
    \]
    with $(\tr(\Gamma))_j = \sum_{i} (\Gamma_{ij})_i$. Equivalently in Einstein notation:
    \begin{align*}
    -\frac{1}{\mu}\dive_\textbf{v}(\Phi_\textbf{v}\mu) 
    &=2\Gamma^i_{ij}v^j  + \Phi^i_\textbf{v}G_{ij}v^j.
    \end{align*}
\end{proposition}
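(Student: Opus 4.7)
The plan is to expand the divergence via the product rule and handle the two resulting pieces separately, each of which matches one of the terms on the right-hand side.

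First, I would write
\[
\dive_v(\Phi_v \mu) = \mu\,\dive_v(\Phi_v) + \Phi_v\cdot \nabla_v \mu,
\]
so that, after dividing by $-\mu$,
\[
-\frac{1}{\mu}\dive_v(\Phi_v \mu) = -\dive_v(\Phi_v) - \Phi_v \cdot \nabla_v \log\mu.
\]
The $\Phi_v^\top G v$ term is then immediate: the only $v$-dependent piece of $\log \mu$ is $-\tfrac{1}{2} v^\top G(x) v$, so $\nabla_v \log\mu = -G v$, and the second summand becomes $+\Phi_v^\top G v$.

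For the first summand, I would use the fact that in \eqref{eq:eom} only $\eta(x,v)$ depends on $v$, while $G(x)^{-1}\nabla_x \phi(x)$ does not. Hence $-\dive_v(\Phi_v) = \dive_v(\eta)$. Substituting $\eta^a = \Gamma^a_{bc}(x) v^b v^c$ and differentiating in Einstein notation gives
\[
\partial_{v^a}\eta^a = \Gamma^a_{bc}\bigl(\delta^b_a v^c + v^b \delta^c_a\bigr) = \Gamma^a_{ac} v^c + \Gamma^a_{ba} v^b = 2\Gamma^i_{ij} v^j,
\]
where the last equality uses the symmetry $\Gamma^a_{bc} = \Gamma^a_{cb}$ of the Christoffel symbols in their lower indices (which follows from the defining formula in Section~\ref{subsec:lagragian-dynamical-mc}). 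Translating back to traditional notation, this reads $2\tr(\Gamma)\cdot v$ with the convention $(\tr(\Gamma))_j = \sum_i (\Gamma_{ij})_i$, and combining the two pieces yields the claimed identity.

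The calculation is essentially bookkeeping; the only mild obstacle is notational, namely keeping the $x$-derivatives (hidden in $G$ and $\eta$) out of the $v$-divergence and correctly handling the symmetry of the lower indices in $\Gamma^a_{bc}$ to collapse the two terms arising from the product rule into a single factor of $2$.
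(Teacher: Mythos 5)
Your proof is correct and follows essentially the same route as the paper's: expand $\dive_v(\Phi_v\mu)$ by the product rule, note that only $\eta$ contributes to $\dive_v(\Phi_v)$ and use the lower-index symmetry of $\Gamma^a_{bc}$ to get $2\Gamma^i_{ij}v^j$, and use $\nabla_v\log\mu=-Gv$ for the second term. The only difference is that you spell out the symmetry step that the paper leaves implicit.
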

\begin{proof}
    \begin{align*}
    -\frac{1}{\mu}\dive_v(\Phi_v\mu) &=-\partial_{v^i}(\Phi_v^i) - \Phi^i_v\partial_{v^i}\log \mu  
    \\
    &=-\partial_{v^i}(-\Gamma^i_{jk}v^jv^k) +  \frac{1}{2}  \Phi^i_v\partial_{v^i}G_{kl}v^kv^l
    \\
    &=2\Gamma^i_{ij}v^j  + \Phi^i_vG_{ij}v^j
    \end{align*}
\end{proof}
When the position $x$ is fixed, both $\mathrm{tr}(\Gamma)$ and $G$ remain constant and can therefore be precomputed. When instead $v$ is fixed and $x$ varies, the expression can be reformulated using directional derivatives along $v$, which avoids computing the full derivative of the metric $G$. In this setting, we employ the second convenient form:
\begin{proposition}
    \label{prop:div-v-fixed}
    The velocity-divergence of $\Phi_v$ satisfies 
    \[
    -\frac{1}{\mu} \, \mathrm{div}_v (\Phi_v \mu)
    =  \mathrm{Tr}\Bigl(G^{-1} \, \frac{\partial G}{\partial v}\Bigr) 
    - \frac{1}{2} \left(v^\top \frac{\partial G}{\partial v} \, v \right)
    - \nabla_v \phi \cdot v,
    \] 
    where $\frac{\partial G}{\partial v}$ is the directional derivative of $G$ in direction $v$.
    In Einstein notation:
    \[
    -\frac{1}{\mu} \, \mathrm{div}_v (\Phi_v \mu)
    = (G^{ij}-\frac{1}{2}v^iv^j)G_{ij,k}v^k  - \phi_{,j}v^j.
    \]
\end{proposition}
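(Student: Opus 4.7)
The plan is to compute $-\frac{1}{\mu}\dive_v(\Phi_v\mu)$ directly, expanding it as $-\partial_{v^a}\Phi_v^a - \Phi_v^a\,\partial_{v^a}\log\mu$ and simplifying each piece in Einstein notation. Since the term $-G^{ab}\phi_{,b}$ in $\Phi_v$ is independent of $v$, the $v$-derivative $-\partial_{v^a}\Phi_v^a$ picks up only the $\eta$ contribution, yielding $2\Gamma^a_{ab}v^b$ exactly as in the proof of Proposition~\ref{prop:compact div}. I would then reduce the trace of the Christoffel symbols using their definition:
\[
2\Gamma^a_{ab} = G^{ad}(G_{bd,a} + G_{ad,b} - G_{ab,d}).
\]
The first and third terms are equal after relabeling the dummy pair $a \leftrightarrow d$ and using the symmetry of $G^{-1}$ and $G$, so they cancel, leaving $2\Gamma^a_{ab} = G^{ad}G_{ad,b}$ and contributing $G^{ij}G_{ij,k}v^k$ to the final expression.

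For the second piece, the $v$-dependent part of $\log\mu$ is $-\tfrac12 v^\top G v$, so $\partial_{v^a}\log\mu = -G_{ab}v^b$ and
\[
-\Phi_v^a\,\partial_{v^a}\log\mu = \Phi_v^a G_{ab}v^b.
\]
The $-G^{ab}\phi_{,b}$ part of $\Phi_v$ telescopes via $G^{ab}G_{ad} = \delta^b_d$ to $-\phi_{,b}v^b$, which already matches the last term in the target formula. The remaining contribution comes from $\eta$:
\[
-\Gamma^a_{bc}v^bv^c G_{ad}v^d = -\tfrac12 (G_{cd,b} + G_{bd,c} - G_{bc,d})v^bv^cv^d.
\]

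The key step is the observation that the tensor $v^bv^cv^d$ is fully symmetric in all three indices, so each of the three derivative terms in the parenthesis, after relabeling its free indices, reduces to the same scalar $G_{ij,k}v^iv^jv^k$. The signed combination therefore collapses, with coefficient $+1+1-1 = 1$, to $-\tfrac12 G_{ij,k}v^iv^jv^k$. Combining the three contributions yields $G^{ij}G_{ij,k}v^k - \tfrac12 v^iv^j G_{ij,k}v^k - \phi_{,j}v^j = (G^{ij} - \tfrac12 v^iv^j)G_{ij,k}v^k - \phi_{,j}v^j$, as claimed. The main subtlety is exactly this symmetrization: one must resist the reflex of treating the Christoffel's signed combination as antisymmetric and instead carefully track that its three terms are equal after index renaming, so they combine rather than cancel.
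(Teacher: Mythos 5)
Your proof is correct and follows essentially the same route as the paper's: the paper starts from Proposition~\ref{prop:compact div}, substitutes the explicit form of $\Phi_v$, and uses the same two identities you spell out (the contraction $2\Gamma^i_{ij}=G^{ij}G_{ij,k}$ and the collapse of $G_{ij}\Gamma^i_{kl}v^kv^lv^j$ to $\tfrac12 G_{ij,k}v^iv^jv^k$ by full symmetry of $v^iv^jv^k$). Your version merely re-derives the content of Proposition~\ref{prop:compact div} inline and makes the index bookkeeping explicit, which the paper leaves implicit.
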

\begin{proof}
Using proposition \ref{prop:compact div} and the explicit form of $\Phi$ in equation \eqref{eq:eom} we find
    \begin{align}
    -\frac{1}{\mu}\dive_\textbf{v}(\Phi_\textbf{v}\mu) &=2\Gamma^i_{ij}v^j  + \Phi^i_\textbf{v}G_{ij}v^j
    \nonumber
    \\
    &=2\Gamma^i_{ij}v^j  - \Gamma^i_{kl}v^kv^lG_{ij}v^j- G^{ik}\phi_{,k}G_{ij}v^j
    \nonumber
    \\
    &=G^{ij}G_{ij,k}v^k  - G_{ij}\Gamma^i_{kl}v^kv^lv^j- \phi_{,j}v^j
    \nonumber
    \\
    &=(G^{ij}-\frac{1}{2}v^iv^j)G_{ij,k}v^k  - \phi_{,j}v^j \label{eq:velocity divergence}
    \end{align}
\end{proof}

%%%%%%%%%%%%%%%%%%%%%%%%%%%%%%%%%%%%%%%%%%%%%%%%%%%%%%%%%%%%%%%%%%%%%%%%%%%
\subsection{Split Lagrangian PDMP (SL-PDMP)}
The Split Lagrangian PDMP (SL-PDMP) is obtained by applying the flow-splitting construction of Section~\ref{subsec:flow-splitting} to the Lagrangian dynamics introduced in Section~\ref{subsec:lagragian-dynamical-mc}. 
Using the notation introduced above, we set
\[
    \phi = -\log \pi + \tfrac{1}{2}\log \det G
\]
in equation~\eqref{eq:eom}. 
The resulting SL-PDMP involves a single type of event, corresponding to a flip of $\alpha$ between $0$ and $1$. 
The event rates are given by
\[
\lambda(x,v,0) = [-\rho]^+,
\qquad
\lambda(x,v,1) = [\rho]^+,
\]
where
\[
\rho = -\mu^{-1}\,\mathrm{div}_v(\Phi_v \mu).
\]
With these expressions established, the invariance of the process can be stated as follows.
\begin{proposition}
    The Lagrangian PDMP defined above preserves the measure $\mu \times \mathrm{Unif}(\{0,1\})$.
\end{proposition}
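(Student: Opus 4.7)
The plan is to observe that this proposition is essentially a direct application of Proposition~\ref{prop:split-pdmp-invariance}, specialised to the splitting of the Lagrangian flow along the $x$- and $v$-coordinates. Two things must be verified: first, that the unsplit Lagrangian flow $\Phi = (\Phi_x,\Phi_v)$ preserves $\mu$, so that $\dive(\mu\Phi)=0$; second, that the event rates $\lambda(x,v,0)$ and $\lambda(x,v,1)$ given in the construction of SL-PDMP agree with those prescribed by the generic split scheme.

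The first point is immediate from the proposition recalled in Section~\ref{subsec:lagragian-dynamical-mc} (itself taken from \cite{Lan_2015}): with the choice $\phi = -\log \pi + \tfrac{1}{2}\log\det G$, the vector field $\Phi = (\Phi_x,\Phi_v)$ is exactly the Lagrangian vector field of the Riemannian Hamiltonian system considered there, and this vector field satisfies $\dive(\mu\Phi)=0$. This is exactly the hypothesis $\dive(\mu\Phi)=0$ required to instantiate the split-PDMP framework of Section~\ref{subsec:flow-splitting} with $(d_1,d_2) = (d,d)$ and $(x,y) \leftrightarrow (x,v)$.

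For the second point, I would appeal to Remark~\ref{remark:div-choice}. Since $\dive_x(\Phi_x\mu) + \dive_v(\Phi_v\mu) = 0$, both split rates can be expressed via a single scalar. Taking $\rho = -\mu^{-1}\dive_v(\Phi_v\mu)$ is the natural choice here, because $\Phi_x(x,v)=v$ has zero $x$-divergence and all the analytic work is carried by the $v$-divergence, for which Proposition~\ref{prop:compact div} (or Proposition~\ref{prop:div-v-fixed}) already provides an explicit expression. With this choice, $\lambda(x,v,0)=[-\rho]^+ = [\mu^{-1}\dive_v(\Phi_v\mu)]^+$ and $\lambda(x,v,1)=[\rho]^+ = [-\mu^{-1}\dive_v(\Phi_v\mu)]^+ = [\mu^{-1}\dive_x(\Phi_x\mu)]^+$, which, after matching the labelling of $\alpha \in \{0,1\}$ with the $x$- and $v$-flows, coincide exactly with the rates in Proposition~\ref{prop:split-pdmp-invariance}. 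The kernel flipping $\alpha$ is likewise identical. Invariance of $\mu \times \mathrm{Unif}(\{0,1\})$ then follows at once from Proposition~\ref{prop:split-pdmp-invariance}.

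The main obstacle is really just bookkeeping: one has to be careful with the sign convention when choosing $\rho$ and with the correspondence between $\alpha = 0$ (respectively $\alpha = 1$) and the $x$-flow (respectively $v$-flow), so that the positive parts in the two rate formulas line up correctly. Conceptually nothing new is proved at this stage; once the joint Lagrangian flow is known to preserve $\mu$, the split-PDMP invariance proposition does all the work.
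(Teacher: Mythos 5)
Your proof is correct and follows exactly the paper's route: the paper's own proof is the one-liner ``this is a direct consequence of Proposition~\ref{prop:split-pdmp-invariance}'', and you simply spell out the two hypotheses being invoked (that the unsplit Lagrangian flow satisfies $\dive(\mu\Phi)=0$, via the result quoted from \cite{Lan_2015}, and that the SL-PDMP rates are those of the generic split scheme via Remark~\ref{remark:div-choice}). Your closing caveat about sign conventions is well placed, since the rate formulas in Section~\ref{subsec:flow-splitting} and in the SL-PDMP definition differ by an overall sign in the choice of $\rho$, and reconciling them (as you note) is pure bookkeeping rather than new mathematics.
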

\begin{proof}
    This is a direct consequence of proposition \ref{prop:split-pdmp-invariance}.
\end{proof}
When the position is fixed, we instead use the expression for $\rho$ given in Proposition~\ref{prop:compact div}. When the velocity is fixed, we derive from proposition \ref{prop:div-v-fixed} and the expression of $\phi$ the following result for $\rho$:
\begin{lemma}\label{lemma:rho SL}
    $\rho$ can be written as
    \[
        \rho = \tfrac{1}{2}\,\mathrm{Tr}\!\left(G^{-1} \, \tfrac{\partial G}{\partial v}\right) 
        - \tfrac{1}{2}\, v^\top \Bigl(\tfrac{\partial G}{\partial v}\Bigr) v
        + \nabla \log \pi \cdot v,
    \]
    or, in Einstein notation,
    \[
        \rho = \tfrac{1}{2}(G^{ij} - v^i v^j) G_{ij,k} v^k + v^i (\log \pi)_{,i}.
    \]
\end{lemma}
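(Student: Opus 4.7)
The plan is to start from the velocity-fixed expression for $\rho$ given in Proposition~\ref{prop:div-v-fixed} and simply substitute the SL-PDMP-specific choice $\phi = -\log \pi + \tfrac{1}{2}\log \det G$. Concretely, Proposition~\ref{prop:div-v-fixed} gives
\[
\rho = (G^{ij}-\tfrac{1}{2}v^iv^j)G_{ij,k}v^k - \phi_{,j}v^j,
\]
so the entire task reduces to computing $\phi_{,j}v^j$ and collecting terms.

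For the derivative of $\phi$, the one nontrivial ingredient is Jacobi's formula
\[
\partial_j \log \det G = \mathrm{tr}(G^{-1}\partial_j G) = G^{ik}G_{ik,j},
\]
using symmetry of $G$. Hence
\[
\phi_{,j}v^j = -(\log \pi)_{,j}v^j + \tfrac{1}{2}G^{ik}G_{ik,j}v^j.
\]

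Plugging this into the expression for $\rho$ and relabeling the dummy indices in the log-determinant term ($G^{ik}G_{ik,j}v^j = G^{ij}G_{ij,k}v^k$) yields
\[
\rho = G^{ij}G_{ij,k}v^k - \tfrac{1}{2}v^iv^jG_{ij,k}v^k + (\log \pi)_{,i}v^i - \tfrac{1}{2}G^{ij}G_{ij,k}v^k,
\]
and the two $G^{ij}G_{ij,k}v^k$ terms combine to give a factor of $\tfrac{1}{2}$. Factoring produces the claimed form
\[
\rho = \tfrac{1}{2}(G^{ij} - v^iv^j)G_{ij,k}v^k + v^i(\log \pi)_{,i}.
\]

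There is no real obstacle here beyond careful index bookkeeping: the argument is a one-line substitution followed by an application of Jacobi's formula and cancellation of half of the log-determinant contribution against the $G^{ij}G_{ij,k}v^k$ term already present in Proposition~\ref{prop:div-v-fixed}. If anything, the only point that merits attention is making sure the symmetry of $G$ is invoked so that $\mathrm{tr}(G^{-1}\partial_j G)$ is written in a way that matches the index pattern of the pre-existing term, which is what enables the clean combination into the stated $\tfrac{1}{2}(G^{ij}-v^iv^j)$ coefficient.
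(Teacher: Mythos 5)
Your proof is correct and follows essentially the same route as the paper: substitute $\phi = -\log\pi + \tfrac{1}{2}\log\det G$ into Proposition~\ref{prop:div-v-fixed}, apply Jacobi's formula $\partial_k \log\det G = G^{ij}G_{ij,k}$, and cancel half of the $G^{ij}G_{ij,k}v^k$ term. Your version is slightly more explicit than the paper's (which leaves the Jacobi step implicit), but the argument is identical.
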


\begin{proof}
Starting from Proposition~\ref{prop:div-v-fixed} 
\begin{align*}
    \rho &= \Bigl(G^{ij} - \tfrac{1}{2} v^i v^j\Bigr) G_{ij,k} v^k - \phi_{,j} v^j \\
         &= \Bigl(G^{ij} - \tfrac{1}{2} v^i v^j\Bigr) G_{ij,k} v^k - \tfrac{1}{2}(G^{ij} G_{ij,k}) v^k + v^i (\log \pi)_{,i} \\
         &= \tfrac{1}{2}(G^{ij} - v^i v^j) G_{ij,k} v^k + v^i (\log \pi)_{,i}.
\end{align*}
\end{proof}
\noindent

%%%%%%%%%%%%%%%%%%%%%%%%%%%%%%%%%%%%%%%%%%%%%%%%%%%%%%%%%%%%%%%%%%%%%%%%%%%
\subsection{Covariance-Adaptive BPS (CA-BPS)}

In the expression for $\rho$ derived in SL-PDMP, the term $\nabla \log \pi \cdot v$ corresponds exactly to the contribution appearing in the classical Bouncy Particle Sampler (BPS). 
This observation allows this part of the divergence of the flow $\Phi_x$ to be compensated by BPS-like events, which in turn eliminates the corresponding term in the flow $\Phi_v$. 

As in SL-PDMP, the state space is $\mathbb{R}^{2d} \times \{0,1\}$, but the dynamics are now governed by equation~\eqref{eq:eom} with
\[
\phi = \frac{1}{2}\log \det G.
\]
Compared with the SL-PDMP formulation, the $\log \pi$ term is removed from the expression for $\phi$. 
For this PDMP, two distinct types of events are defined, resulting in dynamics that are slightly more complex than in SL-PDMP. 
The first type, referred to as \emph{BPS events}, is defined by the rates
\begin{align*}
    \lambda_{\mathrm{BPS}}(x,v,0) &= [-v \cdot \nabla_x \log \pi(x)]^+, &
    \lambda_{\mathrm{BPS}}(x,v,1) &= 0,
\end{align*}
and acts on the state $(x,v,0)$ by reflecting the velocity along the gradient direction $w = \nabla_x \log \pi(x)$:
\[
    F: v \mapsto v - 2 \frac{\langle v, w \rangle_{G(x)}}{\| w \|^2_{G(x)}} w,
\]
where the inner product and norm are defined by the metric at $x$. 
The reflection operator $F$ satisfies the following properties: (i) it leaves the target density invariant, i.e., $\mu(x,F(v)) = \mu(x,v)$; (ii) it preserves volumes; and (iii) it ensures that
\[
\lambda_{\mathrm{BPS}}(x,F(v),0) = [-F(v) \cdot \nabla_x \log \pi(x)]^+ = [v \cdot \nabla_x \log \pi(x)]^+,
\]
which in turn implies that
\[
    \lambda_{\mathrm{BPS}}(x,F(v),0) - \lambda_{\mathrm{BPS}}(x,v,0) = v \cdot \nabla_x \log \pi(x).
\]
Thus, although the events are directly influenced by the metric, the event rate itself remains unchanged. In practice this means that this rate remains straightforward to compute.

In addition to BPS events, the PDMP includes \emph{Lagrangian events}, which flip the value of $\alpha$ between $0$ and $1$. 
The corresponding event rates are given by
\begin{align*}
    \lambda_L(x,v,0) &= [-\rho_L]^+, & \lambda_L(x,v,1) &= [\rho_L]^+,
\end{align*}
where
\[
    \rho_L = -\frac{1}{\mu}\,\mathrm{div}_v(\Phi_v \mu).
\]

\begin{proposition}
    The Covariance-Adaptive BPS (CA-BPS) preserves the measure $\nu = \mu \times \mathrm{Unif}(\{0,1\})$.
\end{proposition}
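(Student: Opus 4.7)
The plan is to verify the Kolmogorov forward equation at stationarity on each slice $\alpha \in \{0,1\}$ separately, following the mass-conservation approach used for Proposition~\ref{prop:split-pdmp-invariance}, but now with two event channels (BPS and Lagrangian) on the $\alpha=0$ slice.

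The $\alpha = 1$ case is immediate: only $\Phi_v$ and Lagrangian flips are active, and the balance equation $-\mathrm{div}_v(\Phi_v \mu) = (\lambda_L(x,v,1) - \lambda_L(x,v,0))\mu = \rho_L\mu$ is exactly the definition of $\rho_L$.

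For $\alpha = 0$, I would first simplify the BPS contribution using the three stated properties of the reflection $F$: volume preservation, $\mu$-invariance ($\mu(x,F(v)) = \mu(x,v)$), and the rate identity $\lambda_{\mathrm{BPS}}(x,F(v),0) - \lambda_{\mathrm{BPS}}(x,v,0) = v \cdot \nabla_x \log \pi$. Together these collapse the BPS channel to the scalar $\mu\, v \cdot \nabla_x \log \pi$. What remains is to compute $\mathrm{div}_x(\Phi_x\mu)$ and match it against this BPS term plus the Lagrangian term $\rho_L\mu$. Since $\Phi_x = v$, a direct expansion of $v \cdot \nabla_x \log \mu$ from $\log \mu = \log \pi - \tfrac{1}{2} v^T G v + \tfrac{1}{2} \log \det G + \mathrm{const}$ yields
\[
\mathrm{div}_x(\Phi_x \mu) = \mu \Bigl( v \cdot \nabla_x \log \pi + \tfrac{1}{2} (G^{ij} - v^i v^j)\, G_{ij,k}\, v^k \Bigr),
\]
and by Proposition~\ref{prop:div-v-fixed} applied with $\phi = \tfrac{1}{2} \log \det G$ the second bracket coincides with $\rho_L$. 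The balance at $\alpha=0$ then closes term by term.

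The main obstacle is the last algebraic identification: that the residual in $\mathrm{div}_x(\Phi_x \mu)$ after stripping off $v \cdot \nabla_x \log \pi$ is exactly $\rho_L$ for the CA-BPS choice $\phi = \tfrac{1}{2}\log \det G$. A convenient cross-check routes through SL-PDMP: since $\Phi_x$ and $\mu$ are unchanged, Proposition~\ref{prop:split-pdmp-invariance} yields $\mathrm{div}_x(\Phi_x \mu) = \rho_L^{\mathrm{SL}} \mu$, so the required identity reduces to $\rho_L^{\mathrm{SL}} - \rho_L^{\mathrm{CA}} = v \cdot \nabla_x \log \pi$, which is read off by comparing Lemma~\ref{lemma:rho SL} with the CA-BPS expression above. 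This perspective also exposes the design principle of the algorithm: the BPS events are inserted precisely to carry the $v \cdot \nabla_x \log \pi$ contribution that in SL-PDMP was handled by the Lagrangian flip.
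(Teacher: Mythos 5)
Your proposal is correct and follows essentially the same route as the paper: slice-by-slice mass balance, collapsing the BPS channel to $\mu\, v \cdot \nabla_x \log \pi$ via the three stated properties of $F$, and matching $\mathrm{div}_x(\Phi_x \mu)$ against the BPS and Lagrangian contributions by expanding $v \cdot \nabla_x \log \mu$ (the paper even opens its proof with the same $\rho = \rho_L + v \cdot \nabla_x \log \pi$ decomposition you use as a cross-check). Your observation that the $\alpha = 1$ balance is tautological from the definition of $\rho_L$ is a slightly cleaner shortcut than the paper's explicit expansion, but the argument is otherwise identical.
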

\begin{proof}
The proof can be viewed as a consequence of the fact that in the previous section, the quantity $\rho$ that was defined can be decomposed as
\[
    \rho = \rho_L + v \cdot \nabla_x \log \pi(x).
\] 
Nevertheless, a complete derivation is presented here, following the steps of proposition \ref{prop:split-pdmp-invariance}.
Let $(\nu_t)_{t \geq 0}$ denote the law of the process at time $t$, with initial distribution $\nu$.  
To verify invariance, it suffices to show that $\nu_t = \nu$ for all $t \geq 0$.  

\paragraph{Mass conservation for the $\mathbf{x}$-flow.} 
\[
\frac{\partial \nu_t(x,v,0)}{\partial t} = -\frac{1}{\mu}\, \mathrm{div}_x(\mu \Phi_x) + \lambda_{\mathrm{BPS}}(x,F(v),0) - \lambda_{\mathrm{BPS}}(x,v,0) + \lambda_L(x,v,1) - \lambda_L(x,v,0).
\] 
Expanding the terms gives
\begin{align*}
\frac{\partial \nu_t(x,v,0)}{\partial t}
&= -\mu^{-1} v \cdot \nabla_x \mu + [v \cdot \nabla_x \log \pi]^+ - [-v \cdot \nabla_x \log \pi]^+ + [\rho_L]^+ - [-\rho_L]^+ \\
&= -v^i \partial_i \log \mu + v^i \partial_i \log \pi + \tfrac{1}{2} (G^{ij} - v^i v^j) G_{ij,k} v^k.
\end{align*}
From the expression of $\log \mu$, one finds
\begin{align*}
- v^i \partial_i \log \mu 
&= -v^i \partial_i \log \pi + \tfrac{1}{2} G_{ij,k} v^i v^j v^k - \tfrac{1}{2} G^{ij} G_{ij,k} v^k \\
&= -v^i \partial_i \log \pi - \tfrac{1}{2} (G^{ij} - v^i v^j) G_{ij,k} v^k,
\end{align*}
so that
\[
\frac{\partial \nu_t(x,v,0)}{\partial t} = 0.
\]

\paragraph{Mass conservation for the $\mathbf{v}$-flow.} 
\[
\frac{\partial \nu_t(x,v,1)}{\partial t} = -\frac{1}{\mu}\, \mathrm{div}_v(\mu \Phi_v) + \lambda_L(x,v,0) - \lambda_L(x,v,1).
\] 
Expanding the terms and using proposition \ref{prop:div-v-fixed} yields
\begin{align*}
\frac{\partial \nu_t(x,v,1)}{\partial t}
&= (G^{ij} - \tfrac{1}{2} v^i v^j) G_{ij,k} v^k - \phi_{,j} v^j + [-\rho_L]^+ - [\rho_L]^+ \\
&= (G^{ij} - \tfrac{1}{2} v^i v^j) G_{ij,k} v^k - \tfrac{1}{2} G^{ij} G_{ij,k} v^k - \tfrac{1}{2} (G^{ij} - v^i v^j) G_{ij,k} v^k \\
&= 0.
\end{align*}
Hence, the “mass” is conserved for both the $x$ and $v$ flows, and therefore $\nu$ is invariant.

\end{proof}
As in the case of SL-PDMP, when the position is fixed (i.e., $\alpha = 1$), the expression of Proposition~\ref{prop:compact div} is used to compute $\rho_L$.  
When instead the velocity is fixed, the following lemma gives a convenient expression for the rates.
\begin{lemma}

    $\rho_L$ can be written as
    \[
        \rho = \tfrac{1}{2}\,\mathrm{Tr}\!\left(G^{-1} \, \tfrac{\partial G}{\partial v}\right) 
        - \tfrac{1}{2}\, v^\top \Bigl(\tfrac{\partial G}{\partial v}\Bigr) v
    \]
    or, in Einstein notation,
    \[
        \rho_L = \tfrac{1}{2}(G^{ij} - v^i v^j) G_{ij,k} v^k
    \]
\end{lemma}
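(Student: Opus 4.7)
The plan is to derive the expression for $\rho_L$ by direct specialization of Proposition~\ref{prop:div-v-fixed} to the CA-BPS choice of $\phi$, namely $\phi = \tfrac{1}{2}\log\det G$ (unlike SL-PDMP, the $\log\pi$ term is absent here). No machinery beyond Jacobi's formula for the derivative of a log-determinant should be required.

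Concretely, I would proceed as follows. First, apply Proposition~\ref{prop:div-v-fixed} to obtain
\[
\rho_L = -\frac{1}{\mu}\,\mathrm{div}_v(\Phi_v \mu) = (G^{ij} - \tfrac{1}{2} v^i v^j) G_{ij,k} v^k - \phi_{,j} v^j,
\]
which is valid for any scalar $\phi$ appearing in \eqref{eq:eom}. Second, evaluate $\phi_{,j}$ using Jacobi's formula $\partial_j \log\det G = G^{ik} G_{ik,j}$, which yields $\phi_{,j} v^j = \tfrac{1}{2} G^{ij} G_{ij,k} v^k$. Third, substitute back and simplify: the two $G^{ij} G_{ij,k} v^k$ contributions combine with a factor of $\tfrac{1}{2}$, leaving
\[
\rho_L = \tfrac{1}{2}(G^{ij} - v^i v^j) G_{ij,k} v^k,
\]
which is the Einstein form of the claim. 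The matrix-vector form follows by recognizing $G^{ij} G_{ij,k} v^k = \mathrm{Tr}\bigl(G^{-1}\,\partial G/\partial v\bigr)$ and $v^i v^j G_{ij,k} v^k = v^\top (\partial G/\partial v)\, v$, where $\partial G/\partial v$ is the directional derivative of $G$ along $v$.

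There is no substantive obstacle here; the argument is a one-step specialization of a result already established. As a consistency check, comparing with Lemma~\ref{lemma:rho SL} shows that $\rho - \rho_L = v^i (\log \pi)_{,i}$, in agreement with the decomposition $\rho = \rho_L + v \cdot \nabla_x \log\pi(x)$ used in the preceding invariance proof. The only bookkeeping worth flagging is the conversion between Einstein and matrix-vector notation across the two formulations of the statement.
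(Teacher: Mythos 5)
Your proposal is correct and follows essentially the same route as the paper: both apply Proposition~\ref{prop:div-v-fixed} with $\phi = \tfrac{1}{2}\log\det G$, use $\partial_j \log\det G = G^{ik}G_{ik,j}$ to get $\phi_{,j}v^j = \tfrac{1}{2}G^{ij}G_{ij,k}v^k$, and combine terms. The consistency check against Lemma~\ref{lemma:rho SL} is a nice addition but not needed.
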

\begin{proof}
    As for Lemma \ref{lemma:rho SL} we use Proposition \ref{prop:div-v-fixed}
\begin{align*}
    \rho_{L} &= -\frac{1}{\mu}\dive(\Phi_\textbf{v}\mu)
    \\
    &=(G^{ij}-\frac{1}{2}v^iv^j)G_{ij,k}v^k  - \phi_{,j}v^j 
    \\
    &= (G^{ij} -\frac{1}{2}v^iv^j)G_{ij,k}v^k-\frac{1}{2}G^{ij}G_{ij,k}v^k 
    \\
    &= \frac{1}{2}(G^{ij} - v^iv^j)G_{ij,k} v^k
\end{align*}
\end{proof}
\begin{remark}
    The PDMP can equivalently be formulated without including $\alpha$ in the state space, by instead treating each traversal along the velocity flow (corresponding to $\alpha = 1$) as an event. 
\end{remark}

%%%%%%%%%%%%%%%%%%%%%%%%%%%%%%%%%%%%%%%%%%%%%%%%%%%%%%%%%%%%%%%%%%%%%%%%%%%%%%%
%%%%%%%%%%%%%%%%%%%%%%%%%%%%%%%%%%%%%%%%%%%%%%%%%%%%%%%%%%%%%%%%%%%%%%%%%%%%%%%
%%%%%%%%%%%%%%%%%%%%%%%%%%%%%%%%%%%%%%%%%%%%%%%%%%%%%%%%%%%%%%%%%%%%%%%%%%%%%%%
%%%%%%%%%%%%%%%%%%%%%%%%%%%%%%%%%%%%%%%%%%%%%%%%%%%%%%%%%%%%%%%%%%%%%%%%%%%%%%%
%%%%%%%%%%%%%%%%%%%%%%%%%%%%%%%%%%%%%%%%%%%%%%%%%%%%%%%%%%%%%%%%%%%%%%%%%%%%%%%

%%%%%%%%%%%%%%%%%%%%%%%%%%%%%%%%%%%%%%%%%%%%%%%%%%%%%%%%%%%%%%%%%%%%%%%%%%%
%%%%%%%%%%%%%%%%%%%%%%%%%%%%%%%%%%%%%%%%%%%%%%%%%%%%%%%%%%%%%%%%%%%%%%%%%%%
\section{Simulation of the processes}
\label{sec:simulation}
As mentioned before, the traditional thinning procedure for simulating event times in PDMP samplers requires manual derivation of event rates for each target, which can be too cumbersome for a practical implementation. This limitation was overcome by Metropolised PDMP samplers \cite{chevallier2025practicalpdmpsamplingmetropolis} for which a brief introduction is given in section \ref{subsec:metropolised-PDMP}. Since both SL-PDMP and CA-BPS involve event rates that are too complex to handle analytically, we adopt the Metropolised approach for simulating events.

The Metropolised approach assumes that the deterministic flow can be solved analytically. In both SL-PDMP and CA-BPS, the flow in $x$ is simply $x_t = x_0 + v_0 t$, so for $\alpha = 0$ the rates can be approximated using a piecewise-constant scheme within the Metropolised framework. However, the flow in $v$ does not have a known analytical solution. We therefore adapt the Metropolised PDMP framework by numerically simulating both the ODE for $v$ and the associated event time. 
Simulating the first event of a non-homogeneous Poisson process with rate $\lambda(t)$ is equivalent to solving \[\int_0^t \lambda(s)\,ds = -\log(u),\] with $u \sim \text{Unif}(0,1)$. Thus, when $\alpha = 1$ (i.e., the process follows the velocity flow), the numerical flow $v_t$ is integrated by an ODE solver alongside the integral $\int_0^t \lambda(x_0,v_s,1)\,ds$, until it reaches the threshold $-\log(u)$. At the same time, we compute the integral of the rate of the reverse process. We assume the numerical integration is sufficiently accurate so that both the flow and the rate integrals can be treated as exact. These integrals are then used in eq.~\ref{eq:proba-path} to compute the path probability, and in eq.~\ref{eq:acceptance-proba} to determine the acceptance rate.

It may seem very challenging to numerically solve these equations to sufficient precision, but this is not the case: since $x$ is fixed during the flow, all $x$-dependent terms only need to be computed once. Recall that the ODE in $v$ is
\begin{align*}
    \dot{v} &= -\eta(x,v) - G(x)^{-1} \nabla_x \phi(x),
\end{align*}
with
\[
\phi(x) =
\begin{cases}
- \log \pi(x) + \tfrac{1}{2} \log|G(x)|, & \text{for SL-PDMP}, \\[0.5em]
\tfrac{1}{2} \log|G(x)|, & \text{for CA-BPS},
\end{cases}
\]
and $(\eta(x,v))_k = \sum_{i,j} (\Gamma_{ij})_k(x)\, v_i v_j$. Here, as was the case for the rates, the Christoffel symbols $(\Gamma_{ij})_k(x)$ and the gradient terms $G(x)^{-1} \nabla_x \phi(x)$ can be precomputed. Both ODEs then reduce to quadratic forms in $v$ with fixed coefficients, so the numerical solver only requires minimal computation per step.

\begin{remark}
    There is a parallel between the numerical approximations discussed above and the implicit solver used in Riemannian Manifold HMC, which also relies on numerical approximation and introduces an inherent, unquantified, bias. It should be noted that the implicit solver the RM-HMC requires solving an equation in both position an momentum, while our solver only require positions.
\end{remark}

Finally, to compute the acceptance rate we need to determine the change of volume $\psi$ associated to the reversal of a path in equation \ref{eq:acceptance-proba}. As mentioned in remark \ref{rmk:vol-change} $\psi$ is the change of volume arising from the flow. More precisely, for each path segment with fixed $\alpha$ the map $z_{t_i}\mapsto z_{t_{i+1}}$ adjusts the volume by a factor of
\[
\psi_i = \exp(\int_0^T \dive_a(\Phi_a)dt)
\]
where $a = x$ if $\alpha = 0$ and $a = v$ otherwise.

In our case the flow in $x$ is linear, so clearly $\dive_x(\Phi_x)=0$. The vector field defining the velocity flow (for either split PDMP given above) is quadratic in $v$, and we get a familiar term
\[
\dive_v(\Phi_v) = 2\Gamma^i_{ij}v^j = G^{ij}G_{ij,k}v^k,
\]
or in classical notation
\[
\dive_v(\Phi_v) = \tr (G^{-1}\frac{\partial G}{\partial v}).
\]
Thus along any segment where a velocity flow is followed it is necessary to compute the time integral of the above divergence. Such a term is already present in the rates, and thus this does not alter the difficulty in simulation.

%%%%%%%%%%%%%%%%%%%%%%%%%%%%%%%%%%%%%%%%%%%%%%%%%%%%%%%%%%%%%%%%%%%%%%%%%%%
%%%%%%%%%%%%%%%%%%%%%%%%%%%%%%%%%%%%%%%%%%%%%%%%%%%%%%%%%%%%%%%%%%%%%%%%%%%
\section{Experiments}
\label{sec:experiments}
The goal of the experiments in this section is to provide guidance on when our algorithm should be preferred over BPS. To this end, we consider two simple but informative posterior distributions. As a first benchmark, we use the two-dimensional banana distribution, a classical nonlinear target in the MCMC literature. We then investigate the role of spectral properties of the covariance structure of the posterior by considering $20$-dimensional Gaussian distributions with anisotropic covariance matrices. In particular, varying the spectral gap in the covariance structure provides insight into how the relative performance of the algorithms depends on the geometry of the target. Consequently, if practitioners have prior knowledge of the covariance structure of their target distribution, especially the spectral gap, they can use these experiments as a reference when selecting an appropriate sampling algorithm.  

To assess the efficiency of the two samplers on these benchmarks, we measure the Kolmogorov–Smirnov (KS) distance between the (known) cumulative distribution of the first-coordinate marginal of the target and the empirical distribution function obtained from the samples. The Kolmogorov–Smirnov distance between two cumulative distribution functions $F(x)$ and $G(x)$ is defined as
\[
D_{\mathrm{KS}}(F, G) = \sup_{x \in \mathbb{R}} \, \big| F(x) - G(x) \big|.
\]

Our method typically requires fewer events than standard BPS, but each event is more expensive to compute. In our experiments we deliberately use simple, cheap posteriors. To extrapolate to realistic (large-\(n\)) settings, where \(n\) denotes the number of data points, we adopt a simple cost model.

\paragraph{Computational cost model}
Our simulations only account for fixed costs, as they are performed on toy targets. If the target were more complex, introducing an additional computation time of $\epsilon$ per event for BPS, then CA-BPS would incur an extra cost of $\beta \epsilon$, where $\beta$ reflects the relative overhead of computing third-order derivatives compared to first-order ones. The costs reported in our toy experiments correspond to the special case $\epsilon = 0$.  

Let $A$ and $B$ denote the runtime costs of BPS and CA-BPS, respectively, when run for a duration yielding equivalent Kolmogorov--Smirnov distance across both algorithms. Let $k_{\mathrm{BPS}}$ and $k_{\mathrm{CA-BPS}}$ denote the corresponding number of events. Including the extra computation time per event $\epsilon$ gives overall runtimes of  
\[
A + k_{\mathrm{BPS}}\,\epsilon \quad \text{for BPS, and} \quad B + k_{\mathrm{CA-BPS}}\,\beta\,\epsilon \quad \text{for CA-BPS.}
\]  
Accordingly, we define the efficiency ratio as  
\begin{equation}
    \label{eq:ratio-def}
    r(\epsilon) \;=\; \frac{B + k_{\mathrm{CA-BPS}}\,\beta\,\epsilon}{A + k_{\mathrm{BPS}}\,\epsilon}.    
\end{equation}  
Note that $r(0) = B/A$ (the experimentally observed ratio), while  
\[
\lim_{\epsilon \to \infty} r(\epsilon) \;=\; \frac{k_{\mathrm{CA-BPS}}\,\beta}{k_{\mathrm{BPS}}},
\]  
so for large $\epsilon$, the comparison depends only on event counts and the relative overhead factor $\beta$.

In a Bayesian setting, the target distribution is a posterior distribution, whose complexity grows with the amount of available data. The parameter $\epsilon$ can be interpreted as the additional computational cost induced by extra data, while $\beta$ represents the relative per-data cost of CA-BPS compared to BPS.  

A brief note on the parameter \(\beta\). Although \(\beta\) can be observed empirically in experiments, any estimate obtained on our toy posteriors does not necessarily generalize to practical targets. For this reason we perform a sensitivity analysis over a range of \(\beta\) values rather than relying on a single empirical estimate. The case \(\beta=1\) is a natural and informative baseline: it describes the neutral regime in which per-datum costs in the bayesian setting are comparable for CA-BPS and BPS (so that differences in overall efficiency stem primarily from fixed overheads and iteration counts). This regime arises, for example, when sufficient statistics or precomputed aggregates remove most per-datum arithmetic (so both methods pay the same marginal cost for increasing the number of data), or when data-fetching and I/O dominate so that arithmetic differences are effectively masked. We present \(\beta=1\) but also additional scenarios with \(\beta > 1\) (corresponding to more expensive higher-order operations).

\begin{remark}
    Typically, algorithms rely only on first-order derivatives and incur small fixed costs, rendering this analysis unnecessary.
\end{remark}

\paragraph{Methodology}  
Each experiment is repeated 100 times to ensure robust performance estimates, and we report the median Kolmogorov--Smirnov (KS) error across these runs. Each algorithm is allowed to run for a fixed duration of 10 seconds per run. Before each set of runs, algorithm parameters are optimized using a nested Brent optimization over the refresh time $T$ and the adaptivity tolerance parameter. Both the inner and outer loops of the optimization use 10 steps, and the optimization criterion is the median KS error over 100 runs, consistent with the metric used for evaluation. We compare two algorithms in our experiments: standard BPS and CA-BPS.

%%%%%%%%%%%%%%%%%%%%%%%%%%%%%%%%%%%%%%%%%%%%%%%%%%%%%%%%%%%%%%%%%%%%%%%%%%%
\subsection{Banana-shaped posterior}

As a first illustrative example, we consider a two-dimensional ``banana'' distribution, a classical test case for sampling algorithms that exhibit strong non-linear correlations. The log-density is defined as
\[
\log \pi(x_1, x_2) = -b \,(x_2 - x_1^2)^2 - a\,(1 - x_1)^2,
\]
with parameters \(a = 1/20\) and \(b = 5000\). This distribution exhibits a curved, narrow valley, which makes it challenging for standard samplers. 
Results for $\beta = 1, 2, 10, 100 $ are shown in Figure \ref{fig:exp-banana-efficiency}. For $\beta = 1$, we observe that $r(0) < 1$, indicating that BPS is more efficient for our toy posterior. However, as the cost of posterior evaluation increases, CA-BPS eventually becomes more efficient, overtaking BPS for posteriors with evaluation times around $10^{-5}$\,s. For $\beta \geq 10$, BPS is always more efficient. This illustrates that the optimal choice of algorithm is highly context-dependent: CA-BPS should generally not be preferred for this target unless posterior evaluation incurs a substantial computational cost and $\beta$ is close to 1.

\begin{figure}
    \centering
    \includegraphics[width=0.99\textwidth]{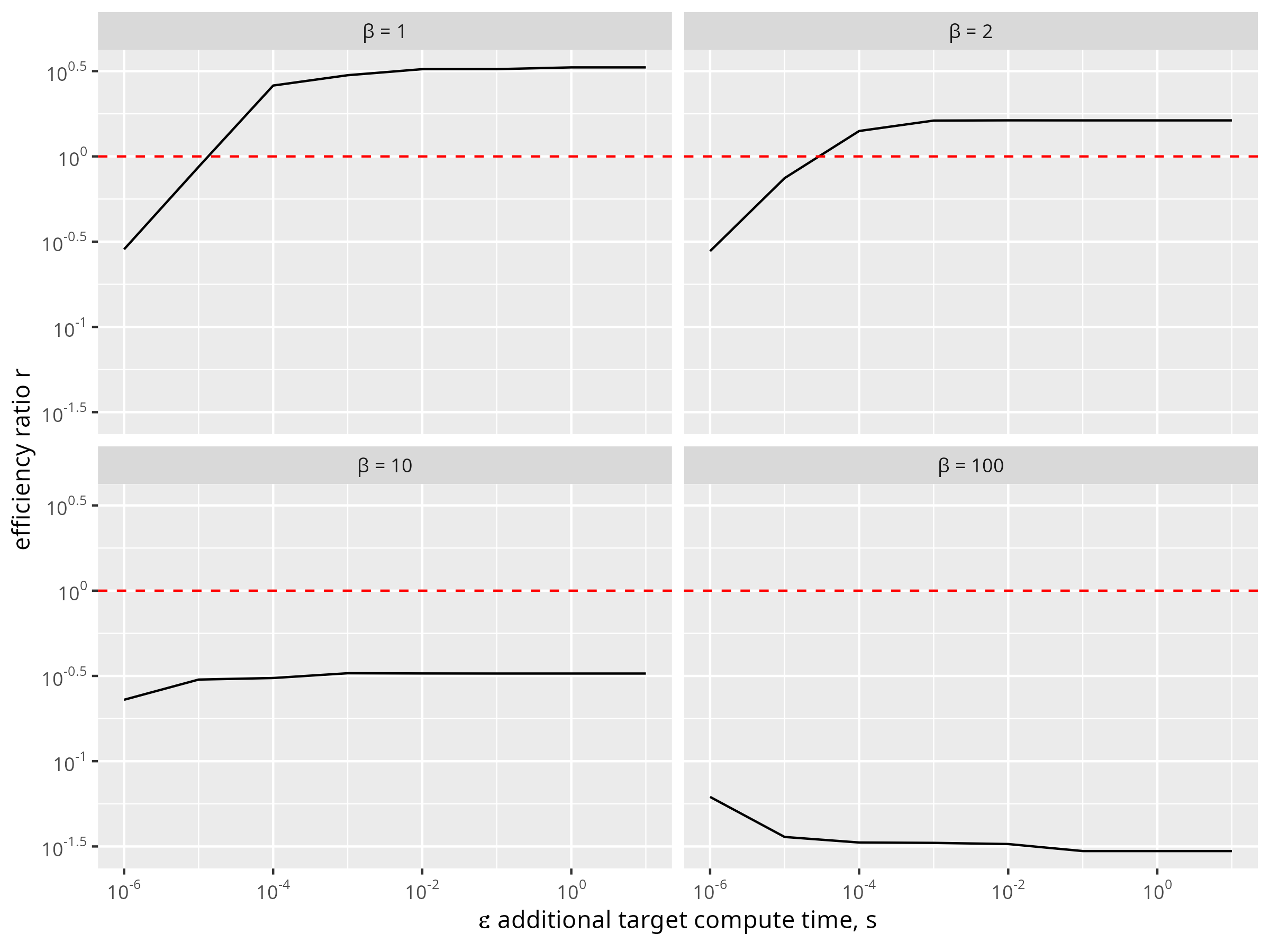}
    \caption{Efficiency ratio of CA-BPS relative to BPS for the banana-shaped posterior, across different values of $\beta$. Regions above the dashed red line indicate a relative advantage for CA-BPS; regions below indicate BPS is more efficient.}
    \label{fig:exp-banana-efficiency}
\end{figure}

%%%%%%%%%%%%%%%%%%%%%%%%%%%%%%%%%%%%%%%%%%%%%%%%%%%%%%%%%%%%%%%%%%%%%%%%%%%
\subsection{Gaussian target}

As a second benchmark, we consider a $20$-dimensional Gaussian distribution with mean zero and covariance matrix 
\[
\Sigma = \mathrm{Diag}(1, 1/\delta, \dots, 1/\delta),
\]
where $\delta$ controls the spectral gap. We vary $\delta$ across values $\delta = 10^i$ for $i = 1, \dots, 5$, thereby spanning regimes from nearly isotropic to strongly anisotropic covariance structures. This setup allows us to investigate how the efficiency of the algorithms depends on the spectral properties of the target distribution.

Figure~\ref{fig:exp-gaussian-20d-efficiency-all} shows that for $\epsilon = 0$, CA-BPS becomes more efficient once the spectral gap $\delta$ lies between $10^2$ and $10^3$. The efficiency ratio can grow substantially, indicating large gains for CA-BPS, especially in the baseline case $\beta = 1$. For very large values of $\beta$ (e.g.\ $\beta = 1000$), however, CA-BPS may become less efficient as $\epsilon$ increases, reflecting the additional cost of higher-order derivatives.

\begin{figure}
    \centering
    \includegraphics[width=0.99\textwidth]{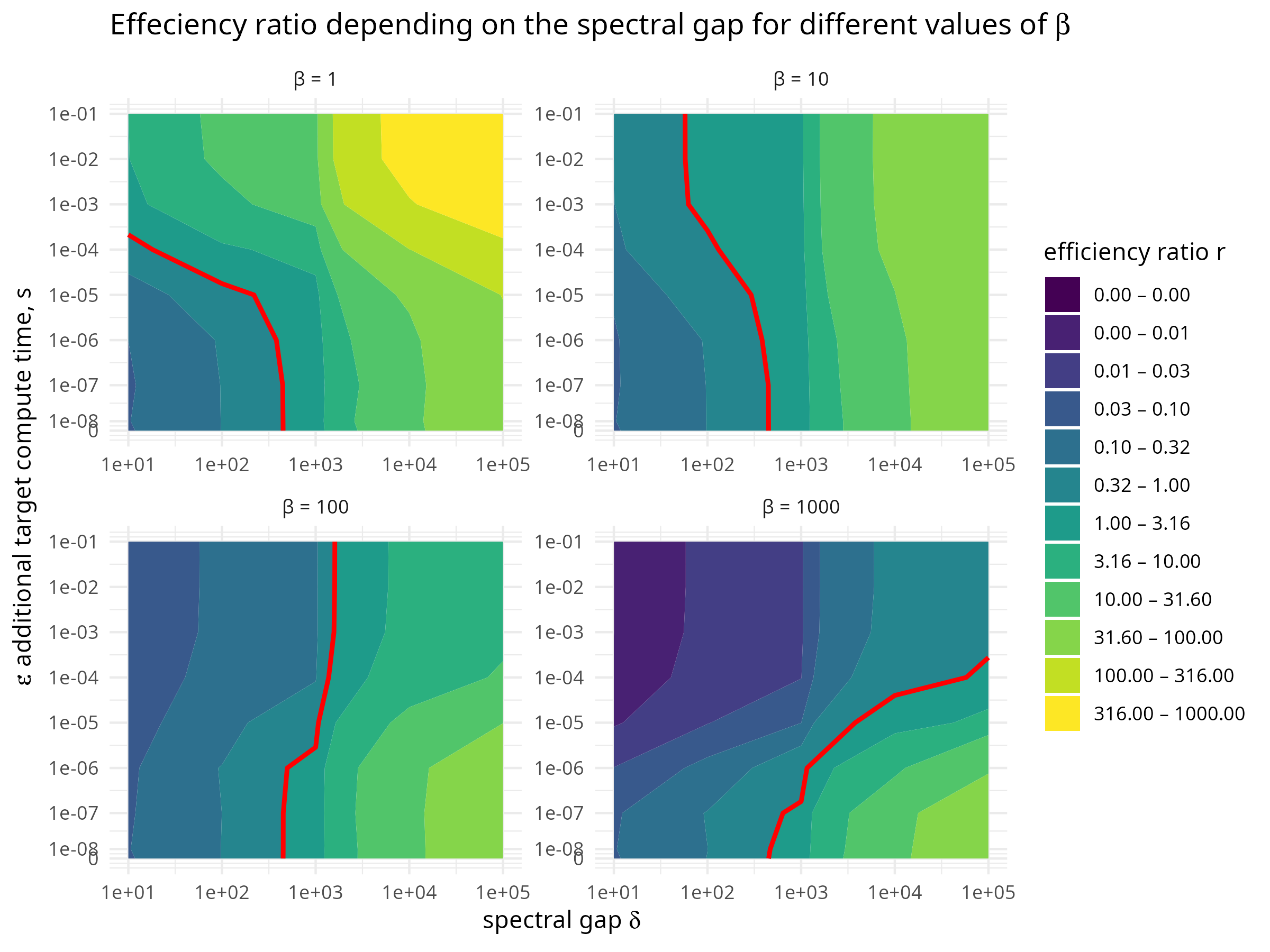}
    \caption{Efficiency ratio of CA-BPS relative to BPS for anisotropic Gaussian targets, as a function of spectral gap $\delta$ and $\beta$. The red line marks the neutral ratio of 1: values below indicate a relative advantage for BPS, while values above indicate CA-BPS is more efficient.}
    \label{fig:exp-gaussian-20d-efficiency-all}
\end{figure}

%%%%%%%%%%%%%%%%%%%%%%%%%%%%%%%%%%%%%%%%%%%%%%%%%%%%%%%%%%%%%%%%%%%%%%%%%%%
%%%%%%%%%%%%%%%%%%%%%%%%%%%%%%%%%%%%%%%%%%%%%%%%%%%%%%%%%%%%%%%%%%%%%%%%%%%
\section{Discussion}
\label{sec:discussion}

Building on the ideas of \cite{girolami2011riemann} and \cite{Lan_2015}, we developed a novel PDMP that incorporates local covariance information into its dynamics. The resulting process is numerically approximated within the Metropolised PDMP framework \cite{chevallier2025practicalpdmpsamplingmetropolis}.

The implementation of the proposed algorithms highlights several practical considerations and directions for improvement.  
First, to be applicable in generic cases automatic differentiation is crucial. Current libraries provide highly optimized routines for gradients, but support for second-order derivatives is less developed, and third-order derivatives are generally inefficient, leading to unnecessary memory allocations. The performance results reported here should therefore be interpreted as somewhat pessimistic with respect to what could be achieved with more mature tools.  

Second, since the covariance structure need not be exact, it may be sufficient in practice to approximate the Hessian using only a subset of the data. This could substantially reduce computational costs.  

Third, the algorithm is most promising in regimes where the spectral gap is relatively wide, which ensures efficient exploration. A particularly appealing application lies in multi-modal sampling, for instance mixtures of Gaussians with different correlation structures. When combined with techniques such as parallel tempering, the method could open new possibilities for tackling such problems.  

Looking forward, several avenues for improvement stand out. One is to develop efficient approximations by computing only part of the Hessian, thereby further reducing computational overhead. Another is to extend CA-BPS by treating the metric itself as an auxiliary state, updated only when it deviates significantly from its local value. Such a modification would avoid the need to recompute the metric at every event and could therefore yield substantial performance gains. Finally, while Metropolised PDMPs can incorporate the No-U-Turn criterion, we did not implement this here, as it is not yet clear how to define a “U-turn” in the presence of a position-dependent metric. Developing a suitable generalization of this idea constitutes an important direction for future work.

\printbibliography

@article{girolami2011riemann,
  title={Riemann manifold langevin and hamiltonian monte carlo methods},
  author={Girolami, Mark and Calderhead, Ben},
  journal={Journal of the Royal Statistical Society Series B: Statistical Methodology},
  volume={73},
  number={2},
  pages={123--214},
  year={2011},
  publisher={Oxford University Press}
}

@article{Lan_2015,
   title={Markov Chain Monte Carlo from Lagrangian Dynamics},
   volume={24},
   ISSN={1537-2715},
   url={http://dx.doi.org/10.1080/10618600.2014.902764},
   DOI={10.1080/10618600.2014.902764},
   number={2},
   journal={Journal of Computational and Graphical Statistics},
   publisher={Informa UK Limited},
   author={Lan, Shiwei and Stathopoulos, Vasileios and Shahbaba, Babak and Girolami, Mark},
   year={2015},
   month=apr, pages={357–378} 
}

@misc{chevallier2025practicalpdmpsamplingmetropolis,
      title={Towards practical PDMP sampling: Metropolis adjustments, locally adaptive step-sizes, and NUTS-based time lengths}, 
      author={Augustin Chevallier and Sam Power and Matthew Sutton},
      year={2025},
      eprint={2503.11479},
      archivePrefix={arXiv},
      primaryClass={stat.CO},
      url={https://arxiv.org/abs/2503.11479}, 
}

@inproceedings{betancourt2013general,
  title={A general metric for Riemannian manifold Hamiltonian Monte Carlo},
  author={Betancourt, Michael},
  booktitle={International Conference on Geometric Science of Information},
  pages={327--334},
  year={2013},
  organization={Springer}
}

@phdthesis{power_2020, title={Exploring Probability Measures with Markov Processes}, url={https://www.repository.cam.ac.uk/handle/1810/312996}, DOI={10.17863/CAM.60096}, school={Apollo - University of Cambridge Repository}, author={Power, Samuel}, year={2020}, keywords={Mathematical Statistics, Computational Statistics, Algorithms, Applied Probability} }

@article{HMCNeal,
author = {Neal, Radford},
year = {2012},
month = {06},
pages = {},
title = {MCMC using Hamiltonian dynamics},
isbn = {9780429138508},
journal = {Handbook of Markov Chain Monte Carlo},
doi = {10.1201/b10905-6}
}

@article{bouchard2018bouncy,
  title={The bouncy particle sampler: A nonreversible rejection-free Markov chain Monte Carlo method},
  author={Bouchard-C{\^o}t{\'e}, Alexandre and Vollmer, Sebastian J and Doucet, Arnaud},
  journal={Journal of the American Statistical Association},
  volume={113},
  number={522},
  pages={855--867},
  year={2018},
  publisher={Taylor \& Francis}
}

@article{fearnhead2018piecewise,
  title={Piecewise deterministic Markov processes for continuous-time Monte Carlo},
  author={Fearnhead, Paul and Bierkens, Joris and Pollock, Murray and Roberts, Gareth O},
  journal={Statistical Science},
  volume={33},
  number={3},
  pages={386--412},
  year={2018},
  publisher={JSTOR}
}

@article{davis1984piecewise,
  title={Piecewise-deterministic Markov processes: A general class of non-diffusion stochastic models},
  author={Davis, Mark HA},
  journal={Journal of the Royal Statistical Society: Series B (Methodological)},
  volume={46},
  number={3},
  pages={353--376},
  year={1984},
  publisher={Wiley Online Library}
}

@inproceedings{durmus2021piecewise,
  title={Piecewise deterministic Markov processes and their invariant measures},
  author={Durmus, Alain and Guillin, Arnaud and Monmarch{\'e}, Pierre},
  booktitle={Annales de l'Institut Henri Poincar{\'e} (B) Probabilit{\'e}s et Statistiques},
  volume={57},
  number={3},
  pages={1442--1475},
  year={2021},
  organization={Institut Henri Poincar{\'e}}
}

@article{chevallier2024pdmp,
  title={PDMP Monte Carlo methods for piecewise smooth densities},
  author={Chevallier, Augustin and Power, Sam and Wang, Andi Q and Fearnhead, Paul},
  journal={Advances in Applied Probability},
  volume={56},
  number={4},
  pages={1153--1194},
  year={2024},
  publisher={Cambridge University Press}
}

@article{lopker2013time,
  title={On time reversal of piecewise deterministic Markov processes},
  author={L{\"o}pker, Andreas and Palmowski, Zbigniew},
  year={2013}
}

@article{bertazzi2022adaptive,
  title={Adaptive schemes for piecewise deterministic Monte Carlo algorithms},
  author={Bertazzi, Andrea and Bierkens, Joris},
  journal={Bernoulli},
  volume={28},
  number={4},
  pages={2404--2430},
  year={2022},
  publisher={Bernoulli Society for Mathematical Statistics and Probability}
}

\end{document}